\numberwithin{equation}{section}
\newtheorem{theorem}{Theorem}[section]
\newtheorem{proposition}[theorem]{Proposition}
\theoremstyle{definition}
\newtheorem{definition}[theorem]{Definition}
\newtheorem{remark}[theorem]{Remark}
\newtheorem{exercise}{Exercise}[section]
\newtheorem{example}[theorem]{Example}
\newcommand{\bra}[1]{\left( #1 \right)}
\newcommand{\dbra}[1]{\left \langle #1 \right |}
\newcommand{\dket}[1]{\left| #1 \right \rangle}
\newcommand{\sprod}[2]{\left< #1 | #2 \right>}
\newcommand{\sqa}[1]{\left[ #1 \right]}
\newcommand{\cur}[1]{\left\{ #1 \right\}}
\newcommand{\ang}[1]{\left< #1 \right>}
\newcommand{\abs}[1]{\left| #1 \right|}
\newcommand{\nor}[1]{\left\| #1 \right\|}
\newcommand{\cO}{\mathcal{O}}
\newcommand{\cY}{\mathcal{Y}}
\newcommand{\cH}{\mathcal{H}}
\newcommand{\cX}{\mathcal{X}}
\newcommand{\I}{\mathds{1}}
\newcommand{\cS}{\mathcal{S}}
\newcommand{\cop}{\mathcal{C}}
\newcommand{\cK}{\mathcal{K}}
\newcommand{\cL}{\mathcal{L}}
\newcommand{\R}{\mathbb{R}}
\newcommand{\C}{\mathbb{C}}
\newcommand{\tr}{\operatorname{tr}}
\title[QOT: Quantum Channels and Qubits]{Quantum Optimal Transport: Quantum Channels and Qubits
}
\author[G. De Palma]{Giacomo De Palma}
\address{G. DP.: Dipartimento di Matematica, Università di Bologna, 40126, Bologna, Italy }
\email{giacomo.depalma@unibo.it}
\author[D. Trevisan]{Dario Trevisan}
\address{D.T.: Dipartimento di Matematica, Università di Pisa, 56125 Pisa, Italy  }
\email{dario.trevisan@unipi.it}
\date{\today}
\begin{document}

\begin{abstract}
These notes are based on the lectures given by the second author at the School on Optimal Transport on Quantum Structures at Erd\"os Center in September 2022. The focus of the exposition is on two recently introduced approaches on quantum optimal transport: one based on quantum channels as generalized transport plans, the other based on the notion of Hamming-Wasserstein distance of order 1 on multiple-qubit systems. The material is presented in an elementary manner with a focus on the finite-dimensional setting.
\end{abstract}

\maketitle

\section{Introduction}

Quantum Optimal Transport  is a rapidly growing field at the intersection of quantum mechanics and optimal transport theory. While optimal transport theory searches for the most efficient way to transport resources or information from one location to another, in the quantum setting such a problem becomes more challenging due to the non-commutativity of the involved quantities, i.e., states and observables. In recent years, there has been a surge of interest in quantum optimal transport, from both theoretical and computational perspectives, making it an exciting and promising area of research.

These lecture notes closely follow the material presented by the second author in a series of lectures on quantum optimal transport given at the Erdős Center, Budapest, in September 2022. The main focus  is on two recent different proposals for defining quantum optimal transport and their properties. The first proposal, introduced in \cite{de2021quantum-channel}, is based on the use of quantum channels to transport one quantum state onto another. The second proposal, following \cite{de2021quantum-1}, is well-suited for composite systems, in particular multiple-qubit systems, and is based on a natural notion of neighbouring states. Both approaches provide distinct and complementary perspectives on the problem of quantum optimal transport, and we believe they provide significant examples and applications in the field. The notes aim to a detailed and accessible introduction to these proposals, in particular focusing on the mathematical foundations, and highlight some key concepts and some elementary applications, e.g.\ to concentration inequalities.

The exposition is structured as follow: in Section~\ref{sec:classical}, we review some basic notions on the classical optimal transport problem and the related Wasserstein distances. Section~\ref{sec:systems} collects some elementary facts on quantum systems, states, observables and quantum channels. Section~\ref{sec:overview} reviews some literature on quantum optimal transport, providing a rough classification of the main approaches to the subject. Section~\ref{sec:channels} is devoted to the formulation of quantum optimal transport using quantum channels, while Section~\ref{sec:qubits} focuses on the quantum Wasserstein distance on qubit systems.

Although the field is rapidly evolving, we hope that the readers may gain from these notes a basic mathematical understanding of some of the latest developments in quantum optimal transport and their potential impact on the field of quantum information science.

\subsection*{Acknowledgments} We thank the organizers of the School and Workshop on Optimal Transport on Quantum Structures at Erd\"os Center, J.\ Maas, S.\ Rademacher, T.\  Titkos, D.\ Virosztek, as well as the other lecturers, A.\ Figalli, E.A.\ Carlen and F.\ Golse, as well as the speakers and participants, for many stimulating conversations. We also acknowledge three anonymous referees for many suggestions and improvements in the manuscript.

GDP has been supported by the HPC Italian National Centre for HPC, Big Data and Quantum Computing - Proposal code CN00000013, CUP J33C22001170001 and by the Italian Extended Partnership PE01 - FAIR Future Artificial Intelligence Research - Proposal code PE00000013, CUP J33C22002830006.
GDP is a member of the ``Gruppo Nazionale per la Fisica Matematica (GNFM)'' of the ``Istituto Nazionale di Alta Matematica ``Francesco Severi'' (INdAM)''.

DT has been supported by the HPC Italian National Centre for HPC, Big Data and Quantum Computing - Proposal code CN1 CN00000013, CUP I53C22000690001, and by the INdAM-GNAMPA project 2023 ``Teoremi Limite per Dinamiche di Discesa Gradiente Stocastica: Convergenza e Generalizzazione''.

\section{Classical Optimal Transport}\label{sec:classical}

In this section, we review some notation and basic facts about the classical optimal transport theory. We provide no proofs and focus on the elementary setting of finite sets and  hence discrete measures: the general theory is exposed in several excellent monographs, e.g.\ \cite{villani2009optimal, ambrosio2005gradient, santambrogio2015optimal, figalli2021invitation, peyre2019computational}.

\subsection{Monge's problem}
Historically, G.\ Monge's seminal work on optimal transport theory, published in 1781, laid the foundations for a mathematical framework to study the optimal ways of transporting goods or mass from one location to another. He introduced the idea of finding an optimal transportation map as a way to minimize the total cost, or energy, required to move mass between two locations, and it provided a first mathematical framework to study this problem.

Although Monge considered only absolutely continuous distributions of mass (i.e., densities with respect to Lebesgue measure), to keep technicalities at a minimum, we focus instead on the following discrete formulation of the optimal transport problem. Given
 \begin{enumerate}
 \item finite sets $\cX$, $\cY$, representing the source and target positions of the masses (without loss of generality, one usually assumes $\cX = \cY$),
 \item a source distribution of mass $\sigma =  (\sigma(x))_{x\in \cX}$ (with $\sigma(x) \ge 0$ for every $x \in \cX$)
 \item a target distribution $\rho =  (\rho(y))_{y\in \cY}$, (with $\rho(y) \ge 0$ for every $y \in \cY$)
\item  and a cost function for transporting a unit of mass from position $x \in \cX$ to position $y \in \cY$,
\begin{equation} c: \cX \times \cY \to \R, \quad (x,y) \mapsto c(x,y),\end{equation}
\end{enumerate}
Monge's optimal transport problem searches for a transport map $T: \cX \to \cY$ that moves $\sigma$ into $\rho$ with minimal total transport cost, defined as
\begin{equation} \label{eq:monge} \sum_{x \in \cX} c(x, T(x)) \sigma(x).\end{equation}
The condition that $T$ moves $\sigma$ into $\rho$ can be stated as
\begin{equation} \label{eq:monge-map}\sum_{x \in T^{-1}(y)} \sigma(x) = \rho(y), \quad \text{for every $y \in \cY$.}\end{equation}
Summation upon $y \in \cY$, yields immediately that  the total masses of the two distributions must be equal, hence, due to homogeneity of \eqref{eq:monge}, one usually considers  only probability distributions.

%

\subsection{The Kantorovich problem}

Despite being a  natural formulation of the problem, simple examples show that transport maps $T$ satisfying \eqref{eq:monge-map} may not exist for general probability distributions $\sigma$, $\rho$. To overcome this issue, L.\ Kantorovich extended Monge's original approach by considering the possibility of splitting mass into fractions and directing it to different target locations. He introduced the concept of a coupling as a probability distribution over the product space and formulated the problem of finding an optimal coupling as a linear programming problem.

Precisely, a coupling $\pi$ between $\sigma$ and $\rho$ is  defined as a probability distribution on the product set $\cX \times \cY$, such that
\begin{equation} \sum_{x \in \cX} \pi(x,y) = \rho(y), \quad \sum_{y \in \cY} \pi(x,y) = \sigma(x), \quad\text{for every $x \in \cX$, $y \in \cY$,}\end{equation}
i.e., the marginal distributions are respectively $\sigma$ and $\rho$. Let us denote with $\cop(\sigma, \rho)$  the set of couplings between $\sigma$ and $\rho$, which is always non empty, since it contains the product distribution $\pi(x,y) = \sigma(x) \rho(y)$. The transport cost associated to a coupling $\pi$ is defined as the expectation
\begin{equation} \ang{c}_\pi = \sum_{x \in \cX , y \in \cY} c(x,y) \pi(x,y).\end{equation}
Kantorovich formulation extends Monge's problem, since any transport map $T: \cX \to \cY$ that moves $\sigma$ into $\rho$ in the sense of \eqref{eq:monge-map}, is naturally associated to the coupling $\pi(x,y) = \sigma(x) 1_{\cur{y=T(x)}}$, and $\ang{c}_\pi$ equals \eqref{eq:monge}.

Moreover, minimization over all couplings, keeping $\sigma$ and $\rho$ fixed, is a linear programming problem, i.e.\ on optimization problem of a linear cost functional with linear constraints in the variables, that can be solved e.g.\ via the simplex algorithm.

\subsection{The Wasserstein distance}

When $\cX = \cY$ and the cost $c(x,y) = d(x,y)$ is a distance, the optimal transport cost
\begin{equation}\label{eq:w1}  W_1(\sigma,\rho) = \min_{\pi \in \cop(\sigma, \rho)} \sum_{x,y \in \cX} d(x,y) \pi(x, y)\end{equation}
induces a distance between probability distributions over $\cX$, usually called Wasserstein distance, but also known as the Earth Mover's distance, as it measures the minimum amount of work needed to transform one probability distribution into another, where the amount of work is proportional to the distance travelled by each unit of mass during transportation.  More precisely, $W_1$ is called Wasserstein distance of order $1$, since for every $p\ge 1$, one can define the Wasserstein distance of order $p$ as
\begin{equation}  W_p(\sigma,\rho) = \min_{\pi \in \cop(\sigma, \rho)} \bra{ \sum_{x,y \in \cX} d(x,y)^p \pi(x, y)}^{1/p},\end{equation}
which also induces a distance. For $p \in (0,1)$, one defines instead
\begin{equation} W_p(\sigma, \rho) = \min_{\pi \in \cop(\sigma, \rho)}  \sum_{x,y \in \cX} d(x,y)^p \pi(x, y),\end{equation}
which is a Wasserstein distance of order $1$, with respect to the distance $(x,y) \mapsto d(x,y)^p$.

Kantorovich also introduced the fundamental concept of duality in linear programming problems. It is a general tool that establishes a relationship between two different linear programming problems, usually referred to as the primal and the dual one. Roughly speaking, the dual problem is obtained by taking the transpose of the matrix of coefficients defining the primal problem, and exchanging the roles of variables and constraints. In the case of the Wasserstein distance of order $1$, with respect to a distance $d$, its expression is particularly simple:
\begin{equation}\label{eq:kantorovich-classical} W_1(\sigma, \rho) = \max \cur{ \sum_{x \in \cX} f(x) (\sigma(x) - \rho(x) )\, : \, \abs{ f(x) - f(y)} \le d(x,y) \, \forall x, y},\end{equation}
i.e., we maximize the difference between the expectations $\ang{f}_\sigma- \ang{f}_\rho$ among all functions $f$ that are $1$-Lipschitz with respect to the distance $d$.

\subsection{Comparison with other distances}

Of course, there are plenty of other distances between probability distributions, such as
\begin{enumerate}[label = \alph*)]
\item  the Total Variation  distance
\begin{equation}\label{eq:tv}  \| \rho - \sigma\|_{TV} = \frac 1 2 \sum_{x \in \cX} | \sigma(x) - \rho(x)|\end{equation}
\item the Hellinger distance
\begin{equation}\label{eq:hellinger} H(\sigma, \rho) = \sqrt{ \frac 1 2 \sum_{x \in \cX}| \sqrt{\sigma(x)} - \sqrt{\rho(x)}|^2 }\end{equation}
\item the relative entropy or  Kullback-Leibler divergence
\begin{equation}\label{eq:kl} D_{KL}(\sigma|| \rho) = \sum_{x \in \cX} \sigma(x) \ln \left( \sigma(x)/\rho(x) \right),\end{equation}
provided that $\sigma(x) = 0$ whenever $\rho(x) = 0$, otherwise we set it to $+\infty$  (although this is not in general an actual distance, it is commonly employed to compare two distributions).
\end{enumerate}
Comparing the definitions above with \eqref{eq:w1}, it is clear that a possible advantage of the Wasserstein distance is to make use of the underlying geometry on the set $\cX$ induced by the distance $d$.  This is indeed the case, and the Wasserstein distance has found numerous applications in many fields. If we limit ourselves to statistics and machine learning, one of its main applications is in quantifying the difference between empirical distributions of sampled points, which has immediate implications in data analysis. In particular, the Wasserstein distance has been proposed as a discriminator in generative models \cite{arjovsky2017wasserstein}, where it measures the distance between the generated and real distributions.  Additionally, the Wasserstein distance can be used as a tool for geometric interpolation between probabilities, allowing for natural transformation of one distribution into another \cite{peyre2019computational}.  The Wasserstein distance is also a theoretical tool that can be used to prove concentration of measure and other functional inequalities \cite{gozlan2010transport}, which has important implications in probability theory and statistics. The versatility and usefulness of the Wasserstein distance have made it a central concept in modern mathematics and its applications continue to be explored in various research areas.

While the Wasserstein distance has many useful applications, it also has some downsides compared to other metrics such as total variation or relative entropy. One of the main drawbacks becomes apparent in high-dimensional settings where the curse of dimensionality can become an issue when comparing empirical distributions of data -- but to be fair, this issue is common also for other distances. In addition to this, the computation of the Wasserstein distance involves solving an optimization problem, which can be computationally expensive. However, several solutions have been proposed to address these issues. For example, strictly convex terms can be added to the cost \cite{cuturi2013sinkhorn}, to improve the convergence of iterative algorithms, or in the dual formulation, the notion of Lipschitz functions can be relaxed, for instance, by parametrizing over a class of neural networks \cite{arjovsky2017wasserstein}. These solutions can help overcome the computational challenges associated with the Wasserstein distance and make it more practical to use in a variety of settings.


\begin{exercise}\label{ex:TV-W}
Show that the Total variation distance is the Wasserstein distance with respect to the trivial distance $d(x,y) = 1_{\cur{x \neq y}}$. Deduce that
\begin{equation} \lim_{p \to 0^+} W_{p}(\sigma, \rho) = \| \rho -\sigma\|_{TV}.\end{equation}
\end{exercise}

\section{Quantum systems}\label{sec:systems}

Quantum mechanics is a mathematical framework that describes the behaviour of particles at the atomic and subatomic level. One of the key features of the theory is that it replaces commutative objects, such as functions and probabilities, with non-commutative ones, precisely given as operators on a complex Hilbert space.  To keep the exposition simple and avoid topological considerations, in this notes we focus only on the case of finite dimensional systems, which should be considered as the analogue of the case of finite sets and discrete measures from the previous section. Again, we give no proofs of the basic facts recalled in this section, and we recommend to  the interested reader any of the excellent monographs \cite{alicki2001quantum, nielsen2002quantum, bengtsson2017geometry, holevo2019quantum, moretti2019fundamental} for a detailed exposition.

\subsection{Systems, states and observables}

Any quantum system has an associated Hilbert space $\cH$ with a scalar product $\sprod{\cdot}{\cdot}$, which is conventionally anti-linear in the left argument.
With a small abuse of notation, in these notes we will identify the quantum system with its Hilbert space and denote them with the same symbol.
Following Dirac's notation, we write $\dket{\psi} \in \cH$  and $\dbra{\psi} \in \cH^*$ for the linear functional
\begin{equation} \cH \ni \dket{\varphi} \quad \mapsto \quad \sprod{\psi}{\varphi}.\end{equation}
Many classical objects of probability and measure theory have their natural quantum counterpart. The correspondence is summarized in Table \ref{table:dictionary} below. The most relevant ones are:
\begin{enumerate}
\item quantum (real-valued) observables, which correspond to classical functions (or random variables) and are given by linear self-adjoint operators $A: \cH \to \cH$, i.e., $\sprod{\psi}{A \varphi} = \sprod{A \psi}{\varphi}$ for every $\dket{\varphi}, \dket{\psi} \in \cH$,
\item quantum states, which correspond to classical probability distributions, given  by operators $\rho: \cH \to \cH$ that are self-adjoint, positive in the sense of quadratic forms, i.e., $\sprod{\psi}{\rho \psi} \ge 0$ for every $\dket{\psi} \in \cH$, and with unit trace  $\tr[\rho] = 1$. Such operators are also called density operators.
\end{enumerate}
We write $\cL(\cH)$ for the set of linear operators from $\cH$ into itself, $\cO(\cH)$ for the set of observables, and  $\cS(\cH)$ for the set of density operators. Notice that both $\cS(\cH) \subseteq \cO(\cH) \subseteq \cL(\cH)$. We write $\I_{\cH} \in \cO(\cH)$ for the identity operator (often simply $\I$ when the space $\cH$ is understood) and $\I_V$ for the orthogonal projector on a subspace $V < \cH$. Given a state $\rho \in \cS(\cH)$ and an observable $A \in \cO(\cH)$, its expected value is defined as
\begin{equation} \ang{A}_\rho = \tr [ A \rho].\end{equation}
For a chosen orthonormal basis $(\psi_i)_{i} \subseteq \cH$, any operator $A \in \cL(\cH)$ can be represented as a square complex matrix $(\sprod{\psi_i}{A \psi_j})_{i,j}$, i.e.,  in Dirac notation
\begin{equation} A = \sum_{i,j}  \dket{\psi_i} \sprod{\psi_i}{A \psi_j}  \dbra{\psi_j}.\end{equation}
In particular, we have the representation
\begin{equation}  \tr[A] = \sum_{i} \sprod{\psi_i}{A \psi_i},\end{equation}
(which however does not depend on the chosen basis) and the matrix is Hermitian if and only if $A$ is self-adjoint. By the spectral theorem,  one can choose a basis of eigenvectors of $A$, so that
\begin{equation} A = \sum_{i} \lambda_i \dket{\psi_i} \dbra{\psi_i},\end{equation}
where $\lambda_i \in \sigma(A)$ are the eigenvalues of $A$. In particular $\tr[A]$ is the sum (with multiplicities) of the eigenvalues of $A$.

When $A= \rho \in \cS(\cH)$ is a density operator, then $\lambda_i = p_i \in [0,1]$ and one obtains a classical probability density by counting the eigenvalues with their multiplicity, since $\tr[\rho]=1$. A state is called pure if $\rho = \dket{\psi}\dbra{\psi} \in \cS(\cH)$, i.e., its spectrum is only $\cur{0,1}$. One may think of pure states as the quantum counterparts of Dirac delta probabilities at $x_0 \in \cX$, i.e., $x \mapsto 1_{\cur{x = x_0}}$.

\begin{figure}[h]
\begin{center}
\begin{longtable}{|l|l|}
 \caption{Classical notions and their quantum counterparts.\label{table:dictionary}}\\
\hline
Classical & Quantum \\
\hline
$\cX$ (finite set) &  $\cH$ (finite dimensional)\\
$x \in \cX$ &  $\dket{\psi} \in \cH$\\
subset $S \subseteq \cX$ &   subspace $V < H$\\ [0.5ex]
\hline
$f: \cX \to \mathbb{C}$ &  $A: \cH \to \cH$ linear; $A \in \cL(\cH)$\\
$f^*$ &  $A^*: \cH \to \cH$, i.e., the adjoint of $A$\\
$|f|^2$ & $A^*A$\\
$f: \cX \to \R$ &  observable $A=A^*$; $A \in \cO(H)$\\
$f: \cX \to [0, \infty)$ &  $A \in \cO(\cH)$, $\sigma(A) \subseteq [0, \infty)$; $A\ge 0$\\ [0.5ex]
\hline
$\sum_{x \in \cX} f(x)$ &  $\tr[A]$\\
probability $(\rho(x))_{x \in \cX}$ & state $\rho \ge 0$, $\tr[\rho] =1$; $\rho \in \cS(\cH)$\\
Dirac delta $x\mapsto 1_{x_0=x}$ & pure state $\rho = \dket{\psi} \dbra{\psi}$, $\sprod{\psi}{\psi}=1$\\ [0.5ex]

\hline

Cartesian product $\cX \times \cY$ & Tensor product $\cH \otimes \mathcal{K}$\\
Partial sum $\sum_{x} f(x,y)$ & Partial trace $\tr_{\cH}[A]$, $A \in \cL(\cH \otimes \mathcal{K})$\\[0.5ex]
\hline

Shannon  entropy:  & Von Neumann entropy: \\
$S(\rho) = -\sum_{x}\rho(x) \log \rho(x)$ & $S(\rho) = - \tr[ \rho \log \rho ]$\\
Relative entropy: & Quantum relative entropy: \\
$D(\rho||\sigma) = \sum_{x} \rho(x) \ln( \rho(x)/\sigma(x))$ & $S(\rho|| \sigma) = \tr[ \rho (\log \rho - \log \sigma)]$\\ [0.5ex]
\hline
Markov kernel $(N(x,y))_{x \in \cX, y\in \cY}$ & Quantum channel $\Phi: \cL(\cH) \to \cL(\mathcal{K})$\\
\hline
\end{longtable}
\end{center}
\end{figure}

\begin{example}
The simplest example of a non-trivial quantum system is given by a two-dimensional space $\cH = \C^2$, where one denotes the standard basis as $\dket{0} = (1,0)$, $\dket{1} = (0,1) \in \C^2$. This should be seen as the quantum analogue of a two-point space $\cX = \cur{0,1}$ (or of a single bit sequence), hence $\C^2$  is also referred as a single-qubit system. To define the quantum analogue of $n$-bits strings $\cX = \cur{0,1}^n$, we discuss  first the notion of composite quantum systems.
\end{example}

\subsection{Composite systems and partial trace}

A composite quantum system $\cH \otimes \cK$ is defined as the tensor product between two (finite dimensional) systems $\cH$, $\cK$. One often writes $\dket{\psi, \varphi} = \dket{\psi} \otimes \dket{\varphi} \in \cH \otimes \cK$ for the elementary tensor products, whose linear span yields the entire space. The composite system $\cH \otimes \cK$ is endowed with the following scalar product:
\begin{equation} \sprod{ \psi, \varphi}{ \psi', \varphi'}_{\cH \otimes \cK} = \sprod{ \psi}{\psi'}_{\cH} \sprod{\varphi}{\varphi'}_{\cK},\end{equation}
 defined on elementary tensor products and naturally extended by linearity. Given orthonormal bases $(\dket{\psi_i})_{i} \subseteq \cH$, $(\dket{\varphi_j})_{j} \subseteq \cK$, the elements $(\dket{\psi_i, \varphi_j})_{i,j}$ yield an orthonormal basis of $\cH \otimes \cK$, which in particular has dimension $\dim(\cH \otimes \cK) = (\dim \cH) (\dim \cK)$. Writing $A \in \cL( \cH \otimes \cK)$  in matrix form
\begin{equation} A = \sum_{i,j, k, \ell} A_{ij,kl} \dket{\psi_i, \varphi_j} \dbra{\psi_k, \varphi_\ell},\end{equation}
one defines the partial trace of $A$ over $\cH$ as the operator on $\cK$ given by
\begin{equation} \tr_\cH[A] = \sum_{j,\ell} \tr_\cH[A]_{j\ell} \dket{\varphi_j} \dbra{\varphi_\ell},\end{equation}
where
\begin{equation} \tr_\cH[A]_{j\ell} = \sum_{i}  A_{ij,il} .\end{equation}
Similarly,  the partial trace of $A$ over $\cK$ is the operator on $\cH$ given by
\begin{equation} \tr_\cK[A] = \sum_{i,k} \tr_\cH[A]_{ik} \dket{\psi_i} \dbra{\psi_k},\end{equation}
where
\begin{equation} \tr_\cK[A]_{ik} = \sum_{j}  A_{ij,kj}.\end{equation}
It is simple to prove that both $\tr_{\cH}[A]$ and $\tr_{\cH}[A]$ do not depend on the chosen bases. When $A = \rho \in \cS(\cH \otimes \cK)$ is a state, the states obtained as partial traces $\rho_\cK = \tr_\cH[\rho]$, $\rho_\cH = \tr_\cK[\rho]$ play the roles of marginal probabilities and are called reduced density operators.

For composite systems $\bigotimes_{i=1}^n \cH_i$ joining $n$ quantum systems $(\cH_i)_{i=1}^n$, we define similarly the partial trace over the system $\cH_i$, and write
\begin{equation} \label{eq:marginal-i}
\tr_i[A] = \tr_{\cH_i}[A] \in \cL( \bigotimes_{j\neq i} \cH_j).
\end{equation}
 Given $I \subseteq \cur{1, \ldots, n}$, and $\rho \in \cS( \bigotimes_{i=1}^n \cH_i )$ we also write
\begin{equation}\label{eq-marginal-I} \rho_I = \tr_{\cur{1, \ldots, n} \setminus I}[\rho]\end{equation}
for the reduced density operator on the composite sub-system $\bigotimes_{i \in I}\cH_i$.
 In particular, when $\cH_i=  \C^2$, one obtains the system $(\C^2)^{\otimes n}$, the quantum analogue of $n$-bits sequences,  which plays a fundamental role in quantum computing. By taking tensor products of the standard basis elements $\cur{ \dket{0}, \dket{1}}$, one obtains the so-called computational basis
\begin{equation} \cur{ \dket{x} }_{x \in \cur{0,1}^n} \subseteq (\C^2)^{\otimes n}.\end{equation}
Usually, one drastically simplifies the notation writing e.g.\
\begin{equation} \dket{000}  = \dket{0}\otimes \dket{0}\otimes \dket{0}, \quad \dket{001}  = \dket{0}\otimes \dket{0}\otimes \dket{1}, \end{equation}
and similarly for all binary sequences.  Although there are many bases available, the computational basis has a distinctive role. For example, classical probabilities over the set of binary sequences are always seen in correspondence with diagonal states in the computational basis.

\subsection{Quantum channels}
The partial trace operators $\tr_\cH[\cdot]$, $\tr_\cK[\cdot]$ are linear and positive, i.e.~map positive operators (on $\cH \otimes \cK$) into positive operators (respectively, on $\cK$ and $\cH$). In fact, they provide a fundamental example of quantum channels, which are the quantum analogues of classical Markov operators, i.e.\ the operators induced by integration with respect to probability kernels.

A quantum channel from a system $\cH$ into a system $\cK$ can be abstractly defined as linear, completely positive, and trace preserving operator $\Phi: \cL(\cH) \to \cL(\cK)$, where the latter means that
\begin{equation} \tr[ \Phi(A)] = \tr[A] \quad \text{for every $A \in \cL(H)$.}\end{equation}
In particular, $\Phi$ maps states on $\cH$ into states on the sytems $\cK$. We do not enter into the details of the theory, in particular concerning complete positivity, which is a strengthening on the concept of positive operator. We only recall here the fundamental structure result, stating that quantum channels from $\cH$ into $\cK$ are  precisely those linear operators $\Phi: \cL(\cH)  \to \cL(\cK)$ that can be represented via a suitable (finite) family of Kraus operators $(B_i)_{i}$, where each $B_i: \cH \to \cK$ is linear, in the following way:
\begin{equation} \Phi(A) = \sum_{i} B_i A B_i^*\quad \text{for every $A \in \cL(\cH)$,}\end{equation}
and moreover
\begin{equation}\label{eq:sum-kraus} \sum_{i} B_i^* B_i = \I_{\cH}.\end{equation}
Moreover, the number of Kraus operators in the representation is always bounded from above by $(\dim \cH ) (\dim \cK)$.
In particular, both the trace $\tr[\cdot]: \cL(\cH) \to \C = \cL(\C)$ and the partial trace operators enjoy such a Kraus representation. Another example is the conjugation via of a unitary map $U: \cH \to \cK$, i.e., $U^* U = \I_{\cH}$, letting $\Phi(A) = U A U^*$. Using the representation in terms of Kraus operators, one sees immediately that the set of all quantum channels from $\cH$ into $\cK$ define a convex compact subset of all the linear operators  from $\cL(\cH)$ to $\cL(\cK)$. Finally, given a quantum channel $\Phi$ from $\cH$ into $\cK$, its adjoint (with respect to the Hilbert-Schmidt scalar product both on $\cL(\cH)$ and $\cL(\cK)$) is usually denoted by $\Phi^\dagger$ and represented by the adjoint of the Kraus operators associated to $\Phi$:
\begin{equation} \Phi^{\dagger}(A) = \sum_{i} B_i^* A B_i,\quad \text{for every $A \in \cL(\cK)$.}\end{equation}
Notice that, by \eqref{eq:sum-kraus}, $\Phi^\dagger(\I_{\cK}) = \I_{\cH}$, i.e., $\Phi^\dagger$ is unital.

Not all the linear operators that map states into states are quantum channels, i.e., enjoy a representation in terms of Kraus operators. A natural example is provided by the transpose map. Given $A \in \cL(\cH)$, its transpose $A^T \in \cL(\cH^*)$ is defined as
\begin{equation} \cH ^* \ni \dbra{\psi} \mapsto A^T \dbra{\psi} = \dbra{\psi} A,\end{equation}
where $\dbra{\psi}A (\dket{\varphi}) = \sprod{\psi}{A \varphi}$ for every $\dket{\varphi} \in \cH$. Given any orthonormal basis $(\dket{\psi_i})_i \subseteq \cH$, the matrix representing $A^T$ with respect to the dual basis $(\dbra{\psi_i})_i \subseteq \cH ^*$ is indeed the transpose (but not conjugate) of the matrix representing $A$. Since the spectrum of $A$ equals the spectrum of its transpose, we see immediately that the transpose maps $\cS(\cH)$ into $\cS(\cH^*)$. However, as soon as $\dim(\cH) \ge 2$ it is possible to prove that it is not a quantum channel.

\begin{exercise}
On a single-qubit quantum system $\cH = \C^2$, for $p \in [0,1]$, define
\begin{equation} \rho_p = (1-p) \dket{0}\dbra{0} + p \dket{1}\dbra{1},\end{equation}
and
\begin{equation}  \dket{\psi_p} = \sqrt{ (1-p)} \dket{0}+ \sqrt{p} \dket{1}.\end{equation}
Prove that $\rho_p$, $\dket{\psi_p} \dbra{\psi_p} \in \cS(\C^2)$. Are they equal?
\end{exercise}

\begin{exercise}
On a two-qubit quantum system $\C^{2} \otimes \C^2$, consider
\begin{equation}  \dket{\Phi^+} := \frac{1}{\sqrt{2}} \bra{ \dket{00} + \dket{11} }\end{equation}
and the associated Bell state $\pi = \dket{\Phi^+} \dbra{\Phi^+} \in \cS(\C^2 \otimes \C^2)$ (which is a pure state). Compute the reduced density operators on the two single-qubit subsystems.
\end{exercise}

\section{An overview of Quantum Optimal Transport}\label{sec:overview}

The development of quantum computing and quantum information theory has led to the exploration of various mathematical concepts, including quantum analogues of classical distances on probabilities. Given states $\sigma, \rho \in \cS(\cH)$ on a (finite dimensional) quantum system $\cH$,

\begin{enumerate}
\item  the quantum analogue of the total variation distance \eqref{eq:tv} is called the trace distance, and defined  as
\begin{equation} D_{\tr} (\sigma, \rho) = \frac 12 \tr[ |\rho- \sigma|],\end{equation}
where $|\rho -\sigma|$ is defined via functional calculus, so that $\tr[ | \rho -\sigma|]$ is the sum (with multiplicities) of the moduli of the eigenvalues of $\rho -\sigma \in \cO(\cH)$,
\item the quantum analogue of the Hellinger distance \eqref{eq:hellinger} is related to the quantum fidelity, defined as
\begin{equation} F(\sigma, \rho) = \tr\sqa{ \sqrt{  \sqrt{\rho} \sigma \sqrt{\rho} } }^2,\end{equation}
while the actual distance corresponding to the Hellinger distance is the Bures metric $\sqrt{(1- F(\sigma, \rho))/2}$,
\item the quantum analogue of the relative entropy \eqref{eq:kl} is the (Umegaki) quantum relative entropy, defined as
\begin{equation} S(\rho \|\sigma) = \tr[ \rho ( \log \rho - \log \sigma) ],\end{equation}
whenever the kernel of $\sigma$ is contained in the kernel of $\rho$ (otherwise one sets the relative entropy to be $+\infty$, as in the classical case).
\end{enumerate}
Like their classical counterparts, these distances can be defined on quite general systems and can be  computed or approximated with a relatively small effort, taking into account the dimension of the system. Moreover, they are not adapted to any specific geometry in the underlying space. This can be made precise by noticing that they are unitarily invariant, i.e., for every unitary $U: \cH \to \cH$,
\begin{equation}  D_{\tr} (U\sigma U^*, U\rho U^*) =   D_{\tr} (\sigma, \rho),\end{equation}
and similarly for the quantum fidelity and the relative entropy. More generally, all these quantities are monotone with respect to the action of any quantum channel $\Phi$ from $\cH$ into $\cK$, i.e.,
\begin{equation}\label{eq:trace-data-processing} D_{\tr} (\Phi(\sigma), \Phi(\rho) ) \le D_{\tr} (\sigma, \rho),\end{equation}
and similarly for the relative entropy
\begin{equation}\label{eq:entropy-data-processing}
S(\Phi(\rho) \| \Phi(\sigma) ) \le  S(\rho \| \sigma)
\end{equation}
and the quantum fidelity (with a reverse inequality). Such inequalities are the quantum analogues of their classical counterparts.

Several proposals for quantum optimal transport problems and induced distances between quantum states have been put forward in recent years, with various applications. Although these lecture notes are focused on two recent proposals by the authors and their collaborators \cite{de2021quantum-channel, de2021quantum-1}, in the literature, many alternative approaches are available, and the (possible) links between them are still not completely explored. The earliest  formulation dates back to Connes and Lott in 1992 \cite{connes1992metric}, who defined the spectral distance in non-commutative geometry. Another early approach was put forth by Zyczkowski and Slomczynski in 1997 \cite{karol1998monge}, who computed the Wasserstein distance between the Husimi (classical) probability distributions associated to states in Bosonic systems. In the setting of free probability, Biane and Voiculescu proposed an analogue of the Wasserstein matric in \cite{biane2001free} . Since 2012, Maas and Carlen \cite{carlen2014analog, carlen2017gradient, carlen2020non} have devised a distance formulating a quantum analogue of the classical Benamou-Brenier formula, which gives a continuous-time formulation of the optimal transport problem (for more information, see also the lecture notes by E.\ Carlen in this volume). In 2013, Agredo \cite{agredo2013wasserstein} proposed a Wasserstein distance that extends any given distance on a set of basis vectors. In 2016, Golse, Mouhot and Paul introduced a quantum Kantorovich problem using quantum couplings, with applications to semi-classical limits \cite{golse2016mean, caglioti2020quantum, golse2021quantum, caglioti2021towards} (for more information, see also the lecture notes by F.\ Golse in this volume).

We can roughly classify all these definitions according to the point of view on classical optimal transport they mostly emphasize, i.e., the (primal) Monge-Kantorovich problem, the dual problem, or the Benamou-Brenier formulation, see Table \ref{table:review}.

\noindent\begin{minipage}{\textwidth}
\begin{center}
\begin{longtable}{|p{0.8\textwidth}|}
 \caption{Classifying quantum optimal transport theories.\label{table:review}}\\
\hline
Monge-Kantorovich \\
\hline
\begin{itemize}[label={-},nosep]
\item distance between Husimi functions \cite{karol1998monge}
\item transport via couplings \cite{golse2016mean}
\item transport via channels \cite{de2021quantum-channel}
\end{itemize}
\\ [0.5ex]
\hline
 Dual problem \\
 \hline

   \begin{itemize}[label={-},nosep]
 \item  spectral distance \cite{connes1992metric}
 \item distance on a basis \cite{agredo2013wasserstein}
 \item Wasserstein distance of order $1$ \cite{de2021quantum-1}
 \end{itemize}
 \\ [0.5ex]
 \hline
 Benamou-Brenier \\
 \hline

  \begin{itemize}[label={-},nosep]
 \item  quantum Benamou-Brenier \cite{carlen2014analog}
 \end{itemize}
 \\
 \hline
\end{longtable}
\end{center}

\end{minipage}

\section{Quantum optimal transport via quantum channels}\label{sec:channels}

In this section, we focus on the proposal \cite{de2021quantum-channel}, where a notion of optimal transport between quantum states is formulated using quantum channels.

\subsection{Quantum transport plans} Following the analogy with the classical theory, we  write $\cX$, $\cY$ for two finite-dimensional quantum systems (possibly $\cX = \cY$) and let $\sigma \in \cS(\cX)$, $\rho \in \cS(\cY)$ be density operators. Recalling the notion of coupling in the sense of Kantorovich, it seems quite natural to define a quantum coupling as a state $\pi \in \cS( \cX \otimes \cY)$  such that
\begin{equation}\label{eq:coupling-golse} \tr_{\cY}[\pi] = \sigma, \quad \tr_{\cX}[\pi] = \rho.\end{equation}
If a cost function is replaced by an operator $C \in \cO(\cX \otimes \cY)$ and the average cost of a
 coupling is $\ang{C}_\pi = \tr[C \pi]$, this strategy yields a straightforward notion of quantum optimal transport cost, by optimizing the  average cost among all the couplings.
Indeed, this is the point of view developed in \cite{golse2016mean} with applications e.g.\ to semi-classical limits of interacting systems.

However, a different approach can also be devised. Indeed, in the classical theory, i.e., when $\cX$ and $\cY$ are sets, Kantorovich couplings $\pi(x,y)$ can be interpreted as generalized maps (transport plans) by conditioning, e.g.\ with respect to the first variable, and defining
\begin{equation} \pi(y| x) = \frac{ \pi(x,y)}{\sigma(x)},\end{equation}
provided that $\sigma(x) >0$. This defines a Markov kernel pushing $\sigma$ into $\rho$.

The quantum analogues of Markov kernels are quantum channels $\Phi$, so we begin with the following definition.

\begin{definition}
Given $\sigma \in \cS(\cX)$, $\rho \in \cS(\cY)$ a quantum transport plan $\Phi$ from $\sigma$ to $\rho$ is a quantum channel $\Phi$ from $\cX$ to $\cY$ such that $\Phi(\sigma) = \rho$.
\end{definition}
 However, we also need to define a notion of transport cost. The issue is that we apparently need a suitable state in a composite system where both states $\sigma$ and $\Phi(\sigma) = \rho$ are encoded. The strategy is to rely first upon the so-called purification of a quantum state to build two ``copies'' of $\sigma$ and then act with $\Phi$ only on one such copy.

Let us prove the following well-known result (valid in fact also for infinite dimensional systems).

\begin{proposition}[purification of a state]\label{prop:purification}
Given any $\sigma \in \cS(\cH)$, there exists an auxiliary quantum system $\cK$ and a pure state $\dket{\Psi}\dbra{\Psi} \in \cS(\cH \otimes \cK)$ such that
\begin{equation} \tr_{\cK}[ \dket{\Psi}\dbra{\Psi}] = \sigma.\end{equation}
\end{proposition}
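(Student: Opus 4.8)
The plan is to construct the purification explicitly using the spectral decomposition of $\sigma$. Since $\sigma \in \cS(\cH)$ is self-adjoint and positive with unit trace, the spectral theorem gives an orthonormal basis $(\dket{\psi_i})_i \subseteq \cH$ of eigenvectors with eigenvalues $p_i \in [0,1]$ satisfying $\sum_i p_i = 1$. I would then take the auxiliary system to be a copy of $\cH$ itself (or any system of dimension at least $\dim \cH$), say $\cK = \cH$ with orthonormal basis $(\dket{\varphi_i})_i$, and define the candidate purification vector
\begin{equation} \dket{\Psi} = \sum_i \sqrt{p_i}\, \dket{\psi_i} \otimes \dket{\varphi_i} \in \cH \otimes \cK. \end{equation}

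The key steps are then routine verifications. First, I would check that $\dket{\Psi}\dbra{\Psi}$ is indeed a pure state in $\cS(\cH \otimes \cK)$, which amounts to checking $\sprod{\Psi}{\Psi} = 1$: using the tensor-product scalar product and orthonormality of both bases, $\sprod{\Psi}{\Psi} = \sum_{i,j} \sqrt{p_i}\sqrt{p_j}\, \sprod{\psi_i}{\psi_j}\sprod{\varphi_i}{\varphi_j} = \sum_i p_i = 1$. Second, I would compute the partial trace over $\cK$. Writing $\dket{\Psi}\dbra{\Psi} = \sum_{i,j} \sqrt{p_i p_j}\, \dket{\psi_i}\dbra{\psi_j} \otimes \dket{\varphi_i}\dbra{\varphi_j}$, applying the definition of partial trace from the excerpt (which traces out the second factor by summing the diagonal entries in the $\cK$-indices) kills all off-diagonal terms in $i,j$ because $\tr[\dket{\varphi_i}\dbra{\varphi_j}] = \sprod{\varphi_j}{\varphi_i} = \delta_{ij}$, leaving $\tr_\cK[\dket{\Psi}\dbra{\Psi}] = \sum_i p_i \dket{\psi_i}\dbra{\psi_i} = \sigma$, as desired.

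I do not expect a genuine obstacle here: the result is elementary once one invokes the spectral theorem, and the only mild care needed is bookkeeping with the indices of the partial trace and making sure the conventions (which factor is traced out, which basis labels which system) are consistent with those fixed earlier in the text. One small point worth mentioning explicitly in the writeup is that the statement only asserts existence of \emph{some} auxiliary system, so there is no need to optimize its dimension; taking $\cK = \cH$ suffices, though one could note that $\dim \cK$ equal to the rank of $\sigma$ would already work. The argument is manifestly dimension-independent in form, which is why the parenthetical remark about infinite dimensions holds — although in that setting one would additionally note that $\dket{\Psi}$ is a genuine element of the Hilbert space tensor product precisely because $\sum_i p_i = 1 < \infty$.
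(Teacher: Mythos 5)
Your proof is correct and follows essentially the same route as the paper: spectral decomposition of $\sigma$ and the Schmidt-type vector $\dket{\Psi}=\sum_i \sqrt{p_i}\,\dket{\psi_i}\otimes\dket{\varphi_i}$, followed by the partial-trace computation. The only difference is that the paper fixes $\cK=\cH^*$ and identifies $\dket{\Psi}$ with $\sqrt{\sigma}$ under the canonical isomorphism $\cH\otimes\cH^*\cong\cL(\cH)$, which makes the purification canonical (basis-independent) and is exploited later when quantum couplings are defined as states on $\cY\otimes\cX^*$.
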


\begin{proof}
Let $\cK = \cH^*$ be the dual of $\cH$, and consider the canonical isomorphism
\begin{equation} \dket{\psi} \otimes \dbra{\varphi} \mapsto  \dket{\psi} \dbra{\varphi}\end{equation}
between $\cH \otimes \cH^*$ and $\cL(\cH)$. We can thus find $\dket{\Psi} \in \cH \otimes \cK$ corresponding to the operator $\sqrt{\sigma} \in \cL(\cH)$ (defined via spectral calculus). We claim that such $\dket{\Psi}$ satisfies the thesis. Indeed, by choosing an orthonormal basis $(\dket{\psi_i})_{i }\subseteq \cH$ of eigenvectors of $\sigma$ with corresponding eigenvalues $(p_i)_{i}$, we have
\begin{equation} \sqrt{ \sigma} = \sum_{i} \sqrt{p_i} \dket{\psi_i} \dbra{\psi_i},\end{equation}
hence
\begin{equation} \dket{\Psi} = \sum_{i} \sqrt{p_i}  \dket{\psi_i}\otimes \dbra{\psi_i}.\end{equation}
Thus,
\begin{equation} \dket{\Psi} \dbra{\Psi} = \sum_{i,j} \sqrt{p_i p_j}  (\dket{\psi_i}\otimes \dbra{\psi_i})( \dbra{\psi_j} \otimes \dket{\psi_j}),\end{equation}
and taking the partial trace over $\cK$ yields
\begin{equation} \tr_\cK[\dket{\Psi} \dbra{\Psi} ] = \sum_{ i}p_i \dket{\psi_i} \dbra{\psi_i} = \sigma.\qedhere \end{equation}
\end{proof}

Of course, neither $\dket{\Psi}$ nor $\cK$ is by any means unique. We thus refer to the construction in the proof above as the canonical purification of $\rho$, so that $\cK = \cH^*$ and one can check easily that
\begin{equation} \tr_{\cH}[ \dket{\Psi} \dbra{\Psi}] = \sum_{ i}p_i \dbra{\psi_i} \,  \dket{\psi_i} \in \cL(H^*) = \sigma^T\end{equation}
is the transpose operator of $\sigma$. 

\begin{example}
On a single-qubit system $\cH = \mathbb{C}^2$, the state
\begin{equation} \sigma =\frac 1 2 \I_{\C^{2}} =  \frac 1 2|0\rangle \langle 0| +\frac 1 2 |1 \rangle \langle 1 |\end{equation}
admits as a purification the Bell state
\begin{equation}|\Phi^+\rangle = \frac{1}{\sqrt{2}}|00\rangle  + \frac{1}{\sqrt{2}}|11\rangle \in (\mathbb{C}^{2})^{\otimes 2 }.\end{equation}
Notice that the above is the canonical purification, up to identifying $\mathbb{C}^2$ with its dual.
\end{example}

Let us also notice that an analogous concept to that of purification cannot exist in  classical probability, since pure states correspond to Dirac deltas, hence their marginals will be Dirac deltas too. Indeed, the purifications will yield (in general) so-called entangled states, which is a peculiar quantum property that has no classical counterpart.

Back to the optimal transport problem, given the states $\sigma \in \cS(\cX)$ and $\rho \in \cS(\cY)$, we associate to any quantum transport plan $\Phi$ from $\sigma$ to $\rho$, the density operator
\begin{equation} \pi_\Phi = ( \Phi\otimes \I_{\cL(\cX^*)} ) (\dket{\Psi}\dbra{\Psi}) \in \cS(\cY \otimes \cX^*),\end{equation}
where $\dket{\Psi} \in \cX \otimes \cX^*$ denotes the canonical purification of $\sigma$. The linear operator $\Phi\otimes \I_{\cL(\cX^*)}$ acts on elementary tensor products $A \otimes B \in \cL( \cX \otimes \cX^*)$ as follows:
\begin{equation} \Phi  \otimes \I_{\cL(\cX^*)} ( A \otimes B) = \Phi(A) \otimes B\end{equation}
and is then extended by linearity to the whole $\cL( \cX \otimes \cX^*) = \cL(\cX) \otimes \cL(\cX^*)$. Using that $\Phi$ admits a representation in terms of Kraus operators, one sees immediately that also $\Phi\otimes \I_{\cL(\cX^*)}$ admits a similar representation, hence it is also a quantum channel and in particular when applied to $\dket{\Phi}\dbra{\Phi}$, it yields indeed a quantum state. Since the channel acts only on the system $\cX$, it is not difficult to prove that reduced density operators of $\pi_\Phi$ are
\begin{equation} \tr_\cY[ \pi_\Phi] = \sigma^T, \quad \tr_{\cX^*} [ \pi_\Phi]  = \Phi(\sigma) = \rho.\end{equation}

From this correspondence, we are thus led to a notion of quantum coupling different than  \eqref{eq:coupling-golse}.

\begin{definition}
Given $\sigma \in \cS(\cX)$, $\rho \in \cS(\cY)$, a quantum coupling between $\sigma$ and $\rho$ is any state $\pi \in \mathcal{S}(\cY \otimes \cX^*)$ such that
\begin{equation} \tr_{\cY} [ \pi ] = \sigma^T \quad \quad \tr_{\cX^*} [ \pi ] = \rho.\end{equation}
The set of such quantum couplings is denoted with $\cop(\sigma, \rho)$.
\end{definition}
Some examples:
\begin{itemize}[label=-]
\item
As in the classical case, the set of quantum couplings is always non-empty, since it contains the product coupling $\pi = \rho \otimes \sigma^T$, induced by the trivial quantum channel
\begin{equation} \Phi(A) =  \tr[A]\rho,\end{equation}
which maps any state (in particular $\sigma$) into $\rho$.
\item If $\sigma  = \dket{\psi}\dbra{\psi}$ is pure, then the product coupling is the only coupling, hence $\cop(\sigma, \rho) = \cur{\rho \otimes \sigma^T}$.
\item If $\rho = \sigma$ (hence $\cX = \cY$) the coupling  induced by the identity  channel $\Phi(A) = A$, which corresponds to the canonical purification of $\sigma$, belongs to $\cop(\sigma, \sigma)$.
\end{itemize}

\begin{remark}\label{rem:symmetry}
Although the definition of  transport plans is not symmetric, since we require that $\Phi(\sigma) = \rho$, symmetry is restored at the level of  couplings, since $\cop(\sigma, \rho)$ is in natural correspondence with $\cop(\rho, \sigma)$, via a swap-transpose map, acting on elementary tensor products as
\begin{equation}
\alpha\otimes\beta^T \mapsto \beta \otimes \alpha^T\qquad \forall\;\alpha\in\mathcal{L}(\mathcal{Y})\,,\;\beta\in\mathcal{L}(\mathcal{X})\,,
\end{equation}
and extended by linearity to $\mathcal{L}(\cY) \otimes \mathcal{L}(\cX^*)$.
\end{remark}

The set of quantum couplings is in correspondence with quantum plans, as the following construction shows (see \cite{de2021quantum-channel} for details). Given $\Pi \in \cop(\sigma, \rho) \subseteq \mathcal{S}(\cY \otimes \cX^*)$, by the spectral theorem, write
\begin{equation}\label{eq:Pi-coupling-channel} \Pi = \sum_i p_i \dket{\Psi_i}\dbra{\Psi_i}\end{equation}
with $(\dket{\Psi_i})_i\subseteq \cY \otimes \cX^*$ is an orthonormal basis and $\sum_i p_i = 1$. Using the canonical isomorphism between $\cY \otimes \cX^*$ and the space of linear operators from $\cX$ to $\cY$,
\begin{equation} \dket{\psi} \otimes \dbra{\varphi} \mapsto  \dket{\psi} \dbra{\varphi},\end{equation}
we have an orthonormal basis $(F_i)_{i}$ corresponding to $(\dket{\Psi_i})_i$. The identity \eqref{eq:Pi-coupling-channel} yields that
\begin{equation}\label{eq:F_i-properties} \sum_{i} p_i F_i^* F_i = \left(\mathrm{tr}_{\mathcal{Y}}\Pi\right)^T = \sigma, \quad \text{and} \quad \sum_{i} p_i F_i F_i^* = \mathrm{tr}_{\mathcal{X}^*}\Pi = \rho.\end{equation}
We thus define the linear operator $\Phi: \cL(\cX) \to \cL(\cY)$,
\begin{equation} \Phi (A) = \sum_i \sqrt{p_i}\,F_i\,\sigma^{-\frac{1}{2}} \, A \, \sigma^{-\frac{1}{2}}\,F_i^* \sqrt{p_i}, \end{equation}
where for simplicity we have assumed that $\sigma$ is invertible (otherwise one needs to use a pseudo-inverse). 
Defining the Kraus operators $B_i = \sqrt{p_i}\,F_i\,\sigma^{-\frac{1}{2}}$, using \eqref{eq:F_i-properties} one obtains
\begin{equation} \sum_{i} B_i^*B_i = \sum_{i} \sigma^{-\frac{1}{2}}\,F_i^* \sqrt{p_i} \sqrt{p_i}\,F_i\,\sigma^{-\frac{1}{2}} = \I_{\cX},\end{equation}
and
\begin{equation} \Phi(\sigma) = \sum_i \sqrt{p_i}\,F_i\,\sigma^{-\frac{1}{2}} \, \sigma \, \sigma^{-\frac{1}{2}}\,F_i^* \sqrt{p_i} = \sum_{i} p_i F_i F_i^*  = \rho,\end{equation}
hence $\Phi$ is a quantum channel such that $\Phi(\sigma) = \rho$.

\subsection{Quantum optimal transport cost}

We are now in a position to follow the same path as in \cite{golse2016mean}, with this alternative notion of coupling. Given states $\sigma \in \cS(\cX)$, $\rho \in \cS(\cY)$ and a cost operator $C \in \cO(\cY \otimes \cX^*)$, we search for the coupling $\pi \in \cop(\sigma, \rho)$ which minimizes the average cost $\ang{C}_\pi = \tr[C \pi]$.
Since the set of couplings is closed and convex and the average cost is linear, an optimal coupling $\pi$ always  exists (operationally, it is a semidefinite programming problem).

In the remainder of this section, we focus on the particular case of \emph{quadratic} cost, i.e.\ we model the cost after the squared Euclidean distance in $\R^d$. Precisely, we let $\cX = \cY$, we fix a set of $d$ observables $(R_i)_{i=1}^d \subseteq \cO(\cX)$  and define the cost operator
\begin{equation} C  = \sum_{i=1}^d ( R_i \otimes \I_{\cX^*} - \I_\cX \otimes R_i^T)^2.\end{equation}
We then write
\begin{equation} D(\sigma, \rho)^2 = \min_{\pi \in \cop(\sigma, \rho) } \tr[ C \pi],\end{equation}
which plays the role of a squared Wasserstein distance of order $2$ in this quantum setting, where $(R_i)_{i=1}^d$ are the ``directions'' that are used to measure the transport cost. Although this is not necessarily a distance, we are going to see that it enjoys several natural properties.

First, let us notice that the structure of the cost operator allows for rewriting  $\ang{C}_\pi$ for any coupling $\pi$ in terms of the associated transport plan $\Phi$, i.e.\ such that
\begin{equation} \pi = \left( \Phi \otimes \I_{\cL(\cX^*)}\right)  \bra{\dket{\Psi} \dbra{\Psi}},\end{equation}
where $\dket{\Psi} \in \cX \otimes \cX^*$ is the canonical purification of $\sigma$. Writing $\Phi^\dagger$ for the adjoint of the channel $\Phi$, we have
\begin{equation}  \tr[ \pi C ]  = \tr\sqa{ \dket{\Psi}\dbra{\Psi} \bra{ \Phi^\dagger \otimes \I_{\cL(\cX^*)} } (C) }
 = \left\langle\Psi\left|\left( \Phi^\dagger \otimes \I_{\cL(\cX^*)} \right) (C) \right|  \Psi\right\rangle .
\end{equation}
By expanding  each square in the the definition of $C$, we have
\begin{equation} (R_i\otimes \I_{\cX^*} - \I_\cX\otimes R_i^T)^2  =  R_i^2\otimes \I_{\cX^*} + \I_\cX\otimes(R_i^2)^T - 2 R_i \otimes R_i^T,\end{equation}
so that
\begin{align}
&\bra{\Phi^\dagger \otimes \I_{\cL(\cX^*)}}\left( (R_i\otimes \I_{\cX^*}  - \I_\cX\otimes R_i^T)^2 \right) \nonumber\\
& =  \Phi^\dagger(R_i^2)\otimes \I_{\cX^*}   + \I_\cX\otimes (R_i^2)^T - 2 \Phi^{\dagger}(R_i) \otimes R_i^T.\end{align}
Using that $\Psi$ is the canonical purification of $\sigma$, we have that
\begin{equation} \left\langle\Psi\left| \Phi^\dagger(R_i^2)\otimes \I_{\cX^*} \right| \Psi\right\rangle = \tr[  \Phi^\dagger(R_i^2) \sigma ] = \tr[ R_i^2 \Phi(\sigma)] = \tr[R_i^2 \rho].\end{equation}
Similarly,
\begin{equation}  \left\langle\Psi\left|\I_\cX\otimes (R_i^2)^T \right| \Psi \right\rangle = \tr[  (R_i^2)^T \sigma^T ] = \tr[R_i^2 \sigma].\end{equation}
These terms do not depend on the specific transport plan (or the coupling). Of course, the third term instead does depend on $\Phi$. Recalling that $\dket{\Psi} \in \cX \otimes \cX^*$ corresponds to $\sqrt{\sigma} \in \cL(\cX)$ in the construction of canonical purification Proposition~\ref{prop:purification}, we have the identity
\begin{equation}  \sprod{\Psi} { \Phi^{\dagger}(R_i) \otimes R_i^T | \Psi} =  \tr[\sqrt{\sigma} \,\Phi^{\dagger}(R_i) \, \sqrt{\sigma} \, R_i ].\end{equation}
Summing all these contributions, we have  the equivalent expression that uses only the transport plan $\Phi$:
\begin{equation}\label{eq:C-pi-plan} \ang{C}_\pi  = \sum_{i=1}^d \left(\tr[  R_i^2 \sigma ] + \tr[  R_i^2 \rho] -2  \tr[ R_i\, \sqrt{\sigma} \, \Phi^{\dagger}(R_i)\,  \sqrt{\sigma}]\right),\end{equation}
which shows that minimizing $\ang{C}_\pi$ is equivalent to maximizing the correlation-like quantity
\begin{equation} \sum_{i=1}^d  \tr[ R_i\, \sqrt{\sigma}\, \Phi^{\dagger}(R_i) \, \sqrt{\sigma}].\end{equation}

\subsection{Properties}
Let us obtain some properties of $D(\sigma, \rho)$. First, notice that by Remark~\ref{rem:symmetry} and the structure of the cost $C$, it follows easily that $D(\sigma, \rho) = D(\rho, \sigma)$.

Notice that, if $\sigma = \rho$ and $\Phi = \I_{\cL(\cX)}$ is the identity channel, then the average cost becomes
\begin{equation} 2 \sum_{i=1}^d\left(  \tr[  R_i^2\sigma ] -  \tr[ R_i \sqrt{\sigma} R_i  \sqrt{\sigma}]\right). \end{equation}
We can prove that this quantity is indeed $D^2(\sigma, \sigma)$, i.e.\ the identity channel is always an optimal plan, as one would expect. Indeed, for general $\sigma$, $\rho \in \cS(\cX)$, we establish the inequality
\begin{equation}\label{eq:distance-lower-bound} D^2(\sigma, \rho) \ge \sum_{i=1}^d \left( \tr[  R_i^2 \sigma] - \tr[ R_i \sqrt{\sigma} R_i  \sqrt{\sigma}] +\tr[R_i^2 \rho] -  \tr[ R_i \sqrt{\rho} R_i  \sqrt{\rho}]\right),\end{equation}
which for $\sigma = \rho$ becomes an equality (since in the minimization that defines the left-hand side one can take the identity channel). To prove \eqref{eq:distance-lower-bound}, by \eqref{eq:C-pi-plan} it is sufficient to consider any transport plan $\Phi$ from $\sigma$ to $\rho$ and argue that, for every $i$,
\begin{equation} 2 \tr[ R_i \sqrt{\sigma} \Phi^{\dagger}(R_i)  \sqrt{\sigma}] \le  \tr[ R_i \sqrt{\sigma} R_i  \sqrt{\sigma}] + \tr[ R_i \sqrt{\rho} R_i  \sqrt{\rho}].\end{equation}
Using the inequality
\begin{equation} 2 \tr[AB] \le \tr[A^2] +\tr[B^2], \quad \text{for $A$, $B \in \cO(\cX)$,}\end{equation}
with $A = \sigma^{1/4} R_i \sigma^{1/4}$,  $B = \sigma^{1/4}\Phi^{\dagger}(R_i)\sigma^{1/4}$, the thesis reduces to the validity of
\begin{equation} \tr[ \Phi^{\dagger}(R_i) \sqrt{\sigma} \Phi^{\dagger}(R_i)  \sqrt{\sigma}] \le \tr[ R_i \sqrt{\rho} R_i \sqrt{\rho}]. \end{equation}
Since $\rho = \Phi(\sigma)$, this is a direct application of the monotonicity version of Lieb's concavity theorem (see \cite[Theorem 1.6]{carlen2022some} with $t=1/2$, but also E.\ Carlen's lecture notes in this volume).

Inequality \eqref{eq:distance-lower-bound} with $\sigma = \rho$ yields that $D(\sigma, \sigma)^2$ can be in fact strictly positive -- simple examples can be found, e.g.\ in the single-qubit case \cite{geher2023quantum}.
Furthermore, \eqref{eq:distance-lower-bound} implies that for any two quantum states $\rho,\,\sigma\in\mathcal{S}(\mathcal{H})$ we have
\begin{equation}\label{eq:Dav}
D(\rho, \sigma)^2 \ge \frac 1 2 D(\rho, \rho)^2 + \frac 1 2 D(\sigma, \sigma)^2\,.
\end{equation}

 The second interesting property that we mention, is a modified triangle inequality:
\begin{equation}\label{eq:modified-triangle} D(\sigma, \rho) \le D(\sigma, \tau) + D(\tau, \tau) + D(\tau, \rho) \qquad \forall\;\rho\,,\;\sigma\,,\;\tau \in \mathcal{S}(\cX).\end{equation}
The proof of the triangle inequality for the classical Wasserstein distance uses a \emph{gluing} procedure between couplings, which however uses conditioning. In terms of transport plans, however, it is clear that starting from any $\Phi_{\sigma \to \tau}$ (from $\sigma$ to $\tau$) and $\Phi_{\tau \to \rho}$ (from $\tau$ to $\rho$), the composition $\Phi_{\tau \to \rho} \circ \Phi_{\sigma \to \tau}$ yields a plan from $\sigma$ to $\rho$, which should provide an upper bound to $D(\sigma, \rho)$. The actual proof however is more involved, and we refer to \cite[Theorem 2]{de2021quantum-channel} for details, explaining in particular the appearance of the extra term $D(\tau, \tau)$ which again may be strictly positive.

\begin{remark}
In view of \eqref{eq:Dav}, we conjecture that the quantity
\begin{equation}  \sqrt{  D(\rho, \sigma)^2 - \frac 1 2 D(\rho, \rho)^2 - \frac 1 2 D(\sigma, \sigma)^2}\end{equation}
may indeed be an actual distance (up to some non-degeneracy assumptions on the $R_i$'s to ensure that the quantity is null if and only if $\rho = \sigma$).
\end{remark}

\section{The quantum Wasserstein distance of order 1 for qubits}\label{sec:qubits}

In this section, following \cite{de2021quantum-1}, we discuss a notion of quantum Wasserstein distance of order $1$, generalizing the classical one induced by the Hamming distance. To keep exposition simple, we limit ourselves to $n$-qubit systems, i.e.\ $(\C^2)^{\otimes n}$, which play a fundamental role in quantum computing and quantum information theory.

\subsection{The Hamming-Wasserstein distance}
Before we describe the quantum Wasserstein distance of order $1$, let us briefly review the classical case.

The set $\cX = \cur{0,1}^n$ of binary strings of length $n$ (also called the discrete $n$-cube) can be endowed with the Hamming distance
\begin{equation} |x-y|_H = \sum_{i=1}^n |x_i-y_i| = \sum_{i=1}^n 1_{\{x_i \neq y_i\}},\end{equation}
 for $x = (x_i)_{i=1}^n$, $y = (y_i)_{i=1}^n \in \cX$. Such a distance induces by \eqref{eq:w1} the Wasserstein distance of order $1$ between probability distributions $\sigma$, $\rho$  over strings
\begin{equation} W_1(\sigma, \rho) = \min_{\pi \in \cop(\sigma, \rho)} \sum_{x,y \in \{0,1\}^n} |x-y|_H \pi(x,y),\end{equation}
which roughly speaking measures the optimal number of characters in the string that on average one has to change to transform $\sigma$ into $\rho$. For example, if  $\sigma$ is a uniform distribution over the possible $2^n$ binary strings, and $\rho$ is a Dirac delta at the string consisting of all zeros, then the only possible coupling leads to
\begin{equation} W_1(\sigma, \rho) = \frac 1 {2^n} \sum_{x \in \{0,1\}^n} \sum_{i=1}^n |x_i| = \frac{n}{2}.\end{equation}

The dual formulation \eqref{eq:kantorovich-classical} reads in this case
\begin{equation} W_1(\sigma, \rho) = \max_{f} \left\{ \sum_{x \in \{ 0,1\}^n} f(x) (\rho(x) - \sigma(x))\right\},\end{equation}
where maximization runs among all functions $f :\{ 0,1\}^n \to \R$ that are $1$-Lipschitz with respect to the Hamming distance, i.e.,
\begin{equation} |f(x) - f(y)|  \le |x-y|_H  = \sum_{i=1}^n 1_{\{x_i \neq y_i\}} \quad \text{for every $x, y \in \cur{0,1}^n$.} \end{equation}
Arguing either from the primal or the dual formulation, it is not difficult to  conclude that, if $\sigma$ and $\rho$ have the same $(n-1)$-marginal distribution, e.g.\
\begin{equation} \sigma(0, x) + \sigma(1, x) = \rho(0, x) + \rho(1, x)  \quad \text{for every $x \in \{0,1\}^{n-1}$,}\end{equation}
then
\begin{equation} W_1(\sigma, \rho) \le 1.\end{equation}
This is because one can build a coupling $\pi \in \cop(\sigma, \rho)$ which keeps unchanged all letters but one. More generally, if for some $1 \le k \le n$, $\sigma$ and $\rho$ have the same $(n-k)$-marginal distributions, then
\begin{equation} W_1(\sigma, \rho) \le k.\end{equation}

Finally, the following inequalities hold, for probability distributions $\sigma$, $\rho$ on $\{0,1\}^n$:
\begin{equation}\label{eq:comparison-TV-W1} \| \sigma - \rho\|_{TV} \le W_1(\sigma, \rho) \le n \| \sigma - \rho \|_{TV}.\end{equation}
This can be seen by combining Exercise~\ref{ex:TV-W} with the trivial inequality
\begin{equation} 1_{\cur{x \neq y}} \le |x-y|_{H} \le n 1_{\cur{x \neq y}}.\end{equation}

\subsection{Construction of the distance}
Back to the quantum case, i.e., $\cX = (\C^{2})^{\otimes n}$, our aim is to define a suitable notion of quantum Wasserstein distance of order $1$, recovering the classical distance for classical states (i.e., density operators that are diagonal with respect to the computational basis) and enjoying useful properties, such as \eqref{eq:comparison-TV-W1}. This eventually leads, by duality, to a notion of quantum Lipschitz observables with interesting features. 

The definition of the quantum Wasserstein distance of order $1$ between two states $\sigma$, $\rho \in \cS((\C^{2})^{\otimes n})$ relies upon the fact that, also in the classical case, it is induced by a norm, and indeed we write $\nor{\rho - \sigma}_{W_1}$ in the quantum case to emphasize this fact. We proceed in multiple steps.
\begin{enumerate}
\item First, mimicking the classical case, we postulate that a state $\rho$ is at distance $\le 1$ from a state $\sigma$ if there exists $i =1,\ldots, n$ such that (recalling the notation \eqref{eq:marginal-i})
\begin{equation} \tr_i[\sigma] = \tr_i[\rho],\end{equation}
i.e.\ the reduced density operators coincide after discarding the $i$-th qubit of the system.
\item Then, we define the unit ball $\mathcal{B}_n$ (centred at $0$) as the convex envelope of differences of states at distance less than one, i.e.,
\begin{equation} \begin{split} \mathcal{B}_n =  & \bigg\{ \sum_{i=1}^n p_i\left(\rho^{(i)} - \sigma^{(i)}\right) :  p_i\ge0,\;\sum_{i=1}^np_i=1, \\
 & \rho^{(i)},\,\sigma^{(i)} \in \cS\bra{(\mathbb{C}^2)^{\otimes n}} \text{ are such that }  \tr_i[\rho^{(i)}]=\tr_i[\sigma^{(i)}] \bigg\}
 \end{split} \end{equation}
and induce a norm via the Minkowski functional:
\begin{equation}
\left\|X\right\|_{W_1} = \min\left(t\ge0:X\in t\,\mathcal{B}_n\right).
\end{equation}
\item Finally, we define the distance between states $\sigma$, $\rho$, as $\nor{\rho - \sigma}_{W_1}$.
\end{enumerate}

Following this construction we are led to the following quantity, which operationally is a semi-definite programming problem.

\begin{definition}[quantum Wasserstein distance of order $1$ on $n$ qubits]
Let $\sigma$, $\rho \in \cS( (\C^2)^{\otimes n} )$. Then, the quantum Wasserstein distance of order $1$ between $\sigma$ and $\rho$ is defined as the quantity
\begin{equation}\label{eq:quantum-w1}\begin{split}
   \|\sigma-\rho\|_{W_1} = & \min \bigg\{ \sum_{i=1}^n c_i : \sigma - \rho = \sum_{i=1}^n c_i( \sigma^{(i)} -\rho^{(i)}), c_i \ge 0, \\
   & \quad  \sigma^{(i)}, \rho^{(i)} \in \cS\bra{ (\C^2)^{\otimes n}} \text{ are such that }   \tr_i[\sigma^{(i)}] = \tr_i[\rho^{(i)}] \bigg\}.\end{split}\end{equation}
\end{definition}
where the minimum runs among all possible $c_i$, $\sigma^{(i)}$ and $\rho^{(i)}$'s.

\subsection{Basic properties}

From the very definition, we see that the quantum Wasserstein distance of order $1$ is invariant with respect to permutations of the qubits, and  unitary operations $U$ acting on a single qubit
\begin{equation}\label{eq:unitary-invariance} \|\sigma - \rho\|_{W_1} =\| U \sigma U^* -  U \rho U^*\|_{W_1}.\end{equation}

The quantum Wasserstein distance of order $1$ can be compared with the trace distance, via the following analogue of \eqref{eq:comparison-TV-W1}:
\begin{equation}\label{eq:comparison-quantum-tr-w1} D_{\tr}(\sigma, \rho) \le \| \sigma - \rho\|_{W_1} \le n D_{\tr} (\sigma, \rho).\end{equation}
To see why this holds,  consider any representation as in \eqref{eq:quantum-w1}
\begin{equation} \sigma - \rho = \sum_{i=1}^n c_i( \sigma^{(i)} -\rho^{(i)}),\end{equation}
take the $1$-Schatten norm on both sides, and use the triangle inequality:
\begin{equation} \tr[ | \sigma - \rho| ] \le \sum_{i=1}^n c_i \tr[ | \sigma^{(i)} -\rho^{(i)} | ] \le 2 \sum_{i=1}^n c_i.\end{equation}
Minimization upon the representations gives the first inequality in \eqref{eq:comparison-quantum-tr-w1}.

To prove the second inequality, assume first that $D_{\tr}(\sigma, \rho) =1$, and let
\begin{equation} \rho^{(i)} = \rho_{\cur{1, \ldots, i}} \otimes \sigma_{\cur{i+1, \ldots, n}},  \quad \quad \sigma^{(i)} = \rho_{\cur{1, \ldots, i-1}} \otimes \sigma_{\cur{i, \ldots, n}}.\end{equation}
Then,
\begin{equation} \rho - \sigma = \sum_{i=1}^n \left(\rho^{(i)} - \sigma^{(i)}\right),\end{equation}
hence
\begin{equation} \|\rho - \sigma\|_{W_1} \le n = n D_{\tr}(\sigma, \rho).\end{equation}
In the general case, we use the spectral theorem on the operator $\rho - \sigma$ to write it as
\begin{equation} \rho - \sigma = (\rho' - \sigma') D_{\tr}(\sigma, \rho),\end{equation}
where $D_{\tr}(\sigma', \rho') = 1$. Since the quantum Wasserstein distance of order $1$ is induced by a norm, we have
\begin{equation} \| \rho - \sigma \|_{W_1} =  D_{\tr}(\sigma, \rho) \| \rho' - \sigma'\|_{W_1} \le n D_{\tr}(\sigma, \rho). \end{equation}

The quantum Wasserstein distance of order $1$ is well-adapted to the tensor product structure of $(\C^2)^{\otimes n}$ as the next proposition shows (see \cite{de2021quantum-1} for a proof).

\begin{proposition}[tensorization]\label{prop:tens}
Given states $\sigma$, $\rho \in \cS( (\C^2)^{\otimes n})$, it holds, for any $I \subseteq \cur{1, \ldots, n}$,
\begin{equation} \| \sigma - \rho\|_{W_1} \ge \|  \sigma_{I} - \rho_{I}\|_{W_1} + \|\sigma_{\cur{1, \ldots, n} \setminus I} - \rho_{\cur{1, \ldots, n} \setminus I}\|_{W_1}\end{equation}
with equality if
\begin{equation} \sigma =  \sigma_{I} \otimes \sigma_{\cur{1, \ldots, n} \setminus I} \quad \text{and} \quad \rho =  \rho_{I} \otimes \rho_{\cur{1\ldots, n} \setminus I}.\end{equation}
\end{proposition}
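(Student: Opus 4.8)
The plan is to prove the two parts separately: the superadditivity inequality and the equality case for product states.

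\emph{Superadditivity.} For the inequality $\|\sigma-\rho\|_{W_1} \ge \|\sigma_I - \rho_I\|_{W_1} + \|\sigma_{I^c} - \rho_{I^c}\|_{W_1}$ (writing $I^c = \{1,\ldots,n\}\setminus I$), I would take an optimal decomposition realizing $\|\sigma-\rho\|_{W_1}$, namely
\begin{equation*}
\sigma - \rho = \sum_{i=1}^n c_i(\sigma^{(i)} - \rho^{(i)}), \qquad \tr_i[\sigma^{(i)}] = \tr_i[\rho^{(i)}], \qquad \sum_{i=1}^n c_i = \|\sigma-\rho\|_{W_1},
\end{equation*}
and apply the partial trace $\tr_{I^c}[\cdot]$ to both sides. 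This yields $\sigma_I - \rho_I = \sum_{i=1}^n c_i(\tr_{I^c}[\sigma^{(i)}] - \tr_{I^c}[\rho^{(i)}])$. The key observation is that for indices $i \in I^c$, the condition $\tr_i[\sigma^{(i)}] = \tr_i[\rho^{(i)}]$ survives after tracing out the remaining qubits of $I^c$, so $\tr_{I^c}[\sigma^{(i)}] = \tr_{I^c}[\rho^{(i)}]$ and those terms contribute zero. For $i \in I$, the states $\tr_{I^c}[\sigma^{(i)}]$ and $\tr_{I^c}[\rho^{(i)}]$ are density operators on $(\C^2)^{\otimes |I|}$ with equal $i$-th marginal (tracing out qubit $i$ commutes with tracing out $I^c$ since $i \notin I^c$), so each such term is (up to the coefficient $c_i$) a legal building block for $\|\sigma_I - \rho_I\|_{W_1}$. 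Hence $\|\sigma_I - \rho_I\|_{W_1} \le \sum_{i \in I} c_i$. Symmetrically, $\|\sigma_{I^c} - \rho_{I^c}\|_{W_1} \le \sum_{i \in I^c} c_i$. Adding gives $\|\sigma_I-\rho_I\|_{W_1} + \|\sigma_{I^c}-\rho_{I^c}\|_{W_1} \le \sum_{i=1}^n c_i = \|\sigma-\rho\|_{W_1}$.

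\emph{Equality for product states.} For the reverse inequality under the hypothesis $\sigma = \sigma_I \otimes \sigma_{I^c}$ and $\rho = \rho_I \otimes \rho_{I^c}$, I would take optimal decompositions on each factor, say $\sigma_I - \rho_I = \sum_{i\in I} c_i(\sigma^{(i)}_I - \rho^{(i)}_I)$ with $\sum_{i \in I} c_i = \|\sigma_I-\rho_I\|_{W_1}$, and similarly on $I^c$. Writing
\begin{equation*}
\sigma - \rho = (\sigma_I - \rho_I)\otimes \sigma_{I^c} + \rho_I \otimes (\sigma_{I^c} - \rho_{I^c}),
\end{equation*}
I would substitute the two decompositions: the first summand becomes $\sum_{i\in I} c_i\big((\sigma^{(i)}_I\otimes\sigma_{I^c}) - (\rho^{(i)}_I\otimes\sigma_{I^c})\big)$, and each term here has equal $i$-th marginal for $i \in I$ since tensoring with the fixed $\sigma_{I^c}$ and tracing out qubit $i\in I$ commute; likewise for the second summand with indices in $I^c$. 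This exhibits a feasible decomposition of $\sigma - \rho$ with total weight $\sum_{i\in I}c_i + \sum_{i\in I^c}c_i = \|\sigma_I-\rho_I\|_{W_1} + \|\sigma_{I^c}-\rho_{I^c}\|_{W_1}$, giving the matching upper bound.

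\emph{Main obstacle.} The only delicate point is verifying that the marginal conditions are preserved under the partial-trace operations in both directions — i.e.\ that $\tr_i$ and $\tr_J$ commute when $i \notin J$, and that the decomposition building blocks remain genuine density operators (positivity and unit trace are clearly preserved by partial trace and by tensoring with a state). These are routine facts about partial traces on tensor-product systems, so the argument is essentially bookkeeping once the right decompositions are pushed through the partial trace; there is no substantial analytic difficulty.
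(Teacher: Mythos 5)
Your proof is correct, and it is essentially the standard argument (the paper itself defers the proof to the reference, but this is exactly how it goes there): superadditivity by applying $\tr_{I^c}$ and $\tr_I$ to an optimal decomposition and checking that the marginal constraints are preserved, and the matching upper bound for product states via the telescoping identity $\sigma-\rho=(\sigma_I-\rho_I)\otimes\sigma_{I^c}+\rho_I\otimes(\sigma_{I^c}-\rho_{I^c})$ together with optimal decompositions of each factor. All the bookkeeping you flag as the "main obstacle" (commutation of partial traces over disjoint index sets, preservation of positivity and normalization) is indeed routine and checks out.
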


For diagonal states with respect to the computational basis, i.e.,
\begin{equation} \sigma = \sum_{x \in \{0,1\}^n} p(x) |x \rangle \langle x |, \quad \quad  \rho = \sum_{x \in \{0,1\}^n} q(x) |x \rangle \langle x |,\end{equation}
the quantum Wasserstein distance of order $1$ recovers precisely the Wasserstein distance of order $1$ with respect to the Hamming distance, i.e.,
\begin{equation}\label{eq:w-1-quantum-classical} \| \rho - \sigma\|_{W_1} = W_1( p, q).\end{equation}
Indeed, by the tensorization property for product states (Proposition \ref{prop:tens}), we have that for every $x, y \in \cur{0,1}^n$,
\begin{equation} \| |x \rangle \langle x | - |y \rangle \langle y | \|_{W_1} =  \sum_{i=1}^n 1_{\{x_i \neq y_i\}} = |x-y|_H.\end{equation}
Given any classical coupling $\pi \in \cop(p,q)$, between $p$ and $q$, the triangle inequality yields
\begin{equation}   \| \rho - \sigma\|_{W_1} = \left\| \sum_{x,y}  \pi(x,y) \left( |x \rangle \langle x | - |y \rangle \langle y |\right) \right\|_{W_1} \le  \sum_{x,y} \pi(x,y) | x-y|_H,\end{equation}
yielding that the quantum Wasserstein distance of order $1$ is cheaper. To argue that it is not strictly cheaper, consider any representation as in \eqref{eq:quantum-w1},
\begin{equation} \sigma - \rho = \sum_{i=1}^n c_i ( \sigma^{(i)} - \rho^{(i)}),\end{equation}
and discard the off-diagonal terms of each $\rho^{(i)}$ and $\sigma^{(i)}$, thus defining classical probability distributions $q^{(i)}$, $p^{(i)}$ over $\cur{0,1}^n$ such that
\begin{equation} p-q = \sum_{i=1}^n c_i (p^{(i)} - q^{(i)}).\end{equation}
 After discarding the $i$-th bit, the $n-1$ marginals of $p^{(i)}$ and $q^{(i)}$  coincide (because $\tr_i[\sigma^{(i)}] = \tr_i[ \rho^{(i)}]$), hence
\begin{equation} W_1(p^{(i)}, q^{(i)}) \le 1.\end{equation}
Since also the classical Wasserstein distance of order $1$ is induced by a norm, by triangle inequality we conclude that
\begin{equation} W_1(p,q) \le \sum_{i=1}^n c_i W_1( p^{(i)},  q^{(i)} ) \le \sum_{i=1}^n c_i,\end{equation}
hence equality holds in \eqref{eq:w-1-quantum-classical}.

\subsection{Distance and channels}

Consider a quantum channel $\Phi = \Phi^{(k)}\otimes \I_{\cL( (\C^{2})^{\otimes (n-k)})}$ on a system of $n$ qubits  but acting only on a subset of $k$ qubits (without loss of generality, the first $k$ qubits). Then, we argue that
\begin{equation} \label{eq:distance-rho-phi-rho} \| \rho - \Phi(\rho) \|_{W_1} \le  2 k.\end{equation}
Indeed, letting $\sigma = \Phi(\rho)$, we have that
\begin{equation}\label{eq:trace-1-k-rho-sigma} \tr_{1 \ldots k} [\rho] = \tr_{1 \ldots k}[\sigma].\end{equation}
We claim that the condition \eqref{eq:trace-1-k-rho-sigma} implies
\begin{equation}\label{eq:distance-w1-k} \|  \rho - \sigma \|_{W_1} \le 2 k.\end{equation}
Indeed, let us define, for $i=1,\ldots, k$,
\begin{equation} \sigma^{ \ge i} = 2^{-i} \I_{1 \ldots i} \otimes \tr_{1 \ldots i}[\sigma], \quad \rho^{ \ge i} = 2^{-i} \I_{1 \ldots i} \otimes \tr_{1 \ldots i}[\rho],\end{equation}
where $\I_{1 \ldots i}$ denotes the identity on the composite system of the first $i$ qubits. Then,
\begin{equation} \tr_i[ \sigma^{ \ge {i-1} }] \otimes \I_i /2  = \sigma^{ \ge i}, \quad \tr_i[ \rho^{ \ge {i-1} }] \otimes \I_i /2  = \rho^{ \ge i}\end{equation}
and for $i= k$, using \eqref{eq:trace-1-k-rho-sigma},
\begin{equation}
 \rho^{\ge k} = \sigma^{\ge k}.
\end{equation}
Summing telescopically, we obtain
\begin{align} \rho - \sigma &= \sum_{i=1}^k \left(( \rho^{ \ge i-1} - \rho^{\ge  i}) - ( \sigma^{ \ge i-1} - \sigma^{\ge  i})\right)\nonumber\\
& = 2 \sum_{i=1}^k \left(\frac 1 2 ( \sigma^{ \ge i}  + \rho^{\ge  i-1} ) - \frac 1 2( \rho^{ \ge i}  + \sigma^{\ge  i-1} )\right).\end{align}
Since
\begin{equation} \tr_i[  \sigma^{ \ge i}  + \rho^{\ge  i-1}  ] = \tr_i[  \rho^{ \ge i}  + \sigma^{\ge  i-1}  ],\end{equation}
we have
\begin{equation} \left\|  \frac 1 2 ( \sigma^{ \ge i}  + \rho^{\ge  i-1} ) - \frac 1 2( \rho^{ \ge i}  + \sigma^{\ge  i-1} ) \right\|_{W_1} \le 1\end{equation}
hence \eqref{eq:distance-w1-k} by triangle inequality.

It is also important for applications to understand how much the quantum Wasserstein distance of order $1$ expands under the action of a quantum channel $\Phi$ acting on systems of qubits (in fact, it does not even need to be the same number of qubits). We introduce the $W_1$-contraction coefficient of $\Phi$ as the quantity
\begin{equation}
\left\|\Phi\right\|_{W_1\to W_1} = \max_{\rho\neq\sigma} \frac{\left\|\Phi(\rho)-\Phi(\sigma)\right\|_{W_1}}{\left\|\rho-\sigma\right\|_{W_1}}.
\end{equation}
Simple examples show that it is not true that $\left\|\Phi\right\|_{W_1\to W_1} \le 1$ in general -- indeed, in the classical case the analogous quantity is related to Ollivier’s coarse Ricci curvature \cite{ollivier2007ricci}. For many families of channels this can be efficiently computed or estimated: see \cite{de2021quantum-1} for examples.

\subsection{Lipschitz observables}

We next search for a dual formulation
\begin{equation}\label{eq:wass-duality}
 \| \rho - \sigma\|_{W_1} = \max_{A} \left\{ \tr[ A (\rho - \sigma) ] \, : \, \| A \|_L \le 1\right\},\end{equation}
where maximization is among observables $A \in \cO( (\C^2)^{\otimes n})$ and $\| \cdot \|_L$ denotes a suitable notion of quantum Lipschitz constant of $A$. Recalling that the Wasserstein distance of order $1$ is induced by a norm, it follows that $\| \cdot \|_L$ must be the dual norm:
\begin{equation}\label{eq:abstract-lipschitz} \| A \|_L = \max_{\rho,\sigma} \left\{   \tr[ A ( \rho- \sigma) ] \, : \, \| \rho - \sigma \|_{W_1} \le 1 \right\}.
\end{equation}
If  we take this as a definition, then duality  \eqref{eq:wass-duality} is straightforward. However, this may seem too abstract. Interestingly, we have instead the following ``operational''  definition of the quantum Lipschitz constant:
\begin{equation}\label{eq:operational-lipschitz}
\left\| A \right\|_L = 2\max_{i=1,\ldots, n}\min_{A^{(i)}} \left\|A - \I_i\otimes A^{(i)}\right\|_\infty\,,
\end{equation}
where $A^{(i)}$ runs among the observables that do not act on the $i$-th qubit system, and $\| \cdot \|_\infty$ denotes the operator norm.

To prove the equivalence between \eqref{eq:abstract-lipschitz} and \eqref{eq:operational-lipschitz}, write temporarily
 \begin{equation} \| A \|_{L}'  = 2\max_{i=1, \ldots, n}\min_{A^{(i)}} \left\| A - \I_i\otimes A^{(i)}\right\|_\infty.\end{equation}
 We prove first that $\| A \|_{L} \le \| A\|_{L}'$. Given states $\rho^{(i)}$, $\sigma^{(i)}$ such that $\tr_i[\rho^{(i)}] = \tr_i[\sigma^{(i)}]$, then
\begin{equation} \tr[ \I_i\otimes A^{(i)}( \rho^{(i)} - \sigma^{(i)})] =\tr[ A^{(i)} \tr_i[ \rho^{(i)} - \sigma^{(i)} ] ]  = 0.\end{equation}
 It follows that, given states $\sigma$, $\rho$ and any representation as in \eqref{eq:quantum-w1}, i.e.,
 \begin{equation} \rho - \sigma = \sum_{i=1}^n c_i ( \rho^{(i)} - \sigma^{(i)}),\end{equation}
  we have the identity
\begin{equation} \tr[ A (\rho - \sigma) ] = \sum_{i=1}^n c_i \tr [  A ( \rho^{(i)} - \sigma^{(i)})]  =\sum_{i=1}^n c_i \tr [  (A  - \I_i\otimes A^{(i)})( \rho^{(i)} - \sigma^{(i)})]. \end{equation}
Using the inequality $\tr[BC] \le \| B\|_\infty \| C\|_1$, we bound from above
\begin{equation} \tr[ A (\rho - \sigma) ] \le \sum_{i=1}^n c_i \| A\|_{L}'.\end{equation}
 Assuming $\| \rho - \sigma\|_{W_1} \le 1$, we obtain the inequality $\| A \|_{L} \le \| A\|_{L}'$. To prove the converse we use the general fact that, for an observable $A \in \cO(\cH)$,
\begin{equation}2 \min_{c \in \mathbb{R}} \| A - c \I_{\cH}\|_\infty = \max_{\rho, \sigma \in \cS(\cH)} \tr[ A (\rho - \sigma)],\end{equation}
whose proof (via the spectral theorem) is left as an exercise to the reader. For $i =1, \ldots, n$ we write
\begin{equation}\begin{split} \min_{A^{(i)}} \| A- \I_i \otimes A^{(i)} \|_\infty & =  \min_{A^{(i)}} \min_{c \in \mathbb{R}} \| (A- \I_i \otimes A^{(i)}) - c \I\|_\infty \\
& =  \frac{1}{2}\min_{A^{(i)}} \max_{\rho, \sigma}  \tr[ ( A -  \I_i\otimes A^{(i)})( \rho - \sigma)] \\
& = \frac{1}{2}\max_{\rho, \sigma}\min_{A^{(i)}}   \tr[ ( A -  \I_i\otimes A^{(i)})( \rho - \sigma)],\end{split}\end{equation}
where in the last step we exchanged minimization and maximization by Fenchel duality theorem.
Next, we reduce maximization to $\rho$, $\sigma$ such that $\tr_i[\rho] = \tr_i[\sigma]$, otherwise one can choose a sequence $A^{(i)}_n$ so that
\begin{equation} \tr[\I_i\otimes A^{(i)}_n( \rho - \sigma)] = \tr[ A^{(i)}_n \tr_i[ \rho - \sigma] ] \to  - \infty. \end{equation}
Therefore, $\tr[\I_i\otimes A^{(i)}( \rho - \sigma)] = 0$, and
\begin{equation} \min_{A^{(i)}} \| A - \I_i \otimes A^{(i)} \|_\infty \le  \max_{\rho, \sigma} \tr[ A (\rho - \sigma)] \le \| A \|_L \max_{\rho, \sigma} \| \rho - \sigma\|_{W_1}.\end{equation}
Finally, we use that  $\| \rho - \sigma\|_{W_1} \le 1$, since $\tr_i[\rho] = \tr_i[\sigma]$.

\begin{remark}
It always holds $\| A \|_{L} \le 2 \min_{c \in \mathbb{R}}\| A  - c \I\|_\infty \le 2 \|A\|_\infty$. Moreover, if
\begin{equation} A = \sum_{I\subseteq \cur{1,\ldots, n}}A_I\end{equation}
 is a sum of local operators, i.e., $A_{I}$ acts only on qubits in the subset $I$, then
\begin{equation}
\left\| A \right\|_L \le 2\max_{i=1, \ldots, n}\left\|\sum_{i \in I}A_I\right\|_\infty,
\end{equation}
simply by taking for $i=1, \ldots, n$,
\begin{equation} \I_i \otimes A^{(i)} = \sum_{I\not \ni i} A_I .\end{equation}
\end{remark}

\subsection{Gaussian concentration inequalities}

It is not difficult to argue that, for a Lipschitz observable $A$,  all the eigenvalues $\lambda$ must belong to the interval
\begin{equation} \tr[A]/2^n - n \|A\|_{L}\le \lambda \le \tr[A]/2^n + n \|A \|_{L}.\end{equation}
The above bound however is not very useful, and in fact most eigenvalues belong to a much smaller interval,  of length  $\approx \sqrt{n} \| A \|_{L}$. This is the content of the following result.

\begin{proposition}[concentration inequality]
Given $A \in \cO( (\C^2)^{\otimes n} )$, for every $\delta>0$, it holds
\begin{equation}\label{eq:concentration}
\dim\left( A \ge \left(\tr[A]/2^n+\delta\sqrt{n}\left\|A \right\|_L/2\right) \I \right) \le 2^n\,\exp(-\delta^2/2).
\end{equation}
\end{proposition}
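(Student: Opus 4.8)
\emph{Step 1 (Chernoff reduction).} Write $\mu=\tr[A]/2^n$ and $t=\delta\sqrt n\left\|A\right\|_L/2$. For $s>0$ let $\Pi$ be the spectral projector of $A$ onto $[\mu+t,\infty)$, so that $\dim(A\ge(\mu+t)\I)=\tr\Pi$. Since $A$ commutes with $\Pi$ and $e^{s(A-(\mu+t)\I)}\succeq\I$ on $\operatorname{range}\Pi$, we get $\tr\Pi\le\tr\big[e^{s(A-(\mu+t)\I)}\big]\le e^{-st}e^{-s\mu}\tr[e^{sA}]$. Hence it suffices to prove the operator sub-Gaussian bound
\[
\frac1{2^n}\tr\!\big[e^{s(A-\mu\I)}\big]\le\exp\!\Big(\tfrac{n\,s^2\left\|A\right\|_L^2}{8}\Big),\qquad s\ge0,
\]
and then optimize: choosing $s=2\delta/(\sqrt n\left\|A\right\|_L)$ turns $-st+\tfrac{n s^2\left\|A\right\|_L^2}{8}$ into $-\delta^2/2$, which gives \eqref{eq:concentration}.

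\emph{Step 2 (Peeling the qubits).} For $0\le k\le n$ let $\mathbb E_{>k}$ be the quantum channel that replaces qubits $k+1,\dots,n$ by the maximally mixed state, and $\mathbb E_k=\tfrac12\I_k\otimes\tr_k$ the analogous one-qubit conditional expectation. Put $\bar A_k=\mathbb E_{>k}[A]$, so $\bar A_0=\mu\I$, $\bar A_n=A$, and $A-\mu\I=\sum_{k=1}^n D_k$ with $D_k=\bar A_k-\bar A_{k-1}=(\mathrm{Id}-\mathbb E_k)\bar A_k$; in particular $\tr_k[D_k]=0$, i.e.\ $\mathbb E_k[D_k]=0$. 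Next I would prove an operator Hoeffding lemma: if $D=D^*$ has spectrum in $[-c,c]$ and $\mathbb E_k[D]=0$, then convexity of $x\mapsto e^{sx}$ gives $e^{sD}\preceq\cosh(sc)\,\I+\tfrac{\sinh(sc)}{c}D$, hence $\mathbb E_k[e^{sD}]\preceq\cosh(sc)\,\I\preceq e^{s^2c^2/2}\,\I$. Combining this with the Golden--Thompson inequality $\tr[e^{s(\bar A_{k-1}+D_k)}]\le\tr[e^{s\bar A_{k-1}}e^{sD_k}]$ and with the fact that $\bar A_{k-1}$ acts trivially on qubit $k$ (so tracing out qubit $k$ factors $\tfrac12\tr_k[e^{sD_k}]=\mathbb E_k[e^{sD_k}]\preceq e^{s^2c_k^2/2}\I$ out of the trace), one gets the recursion $\tr[e^{s\bar A_k}]\le e^{s^2c_k^2/2}\tr[e^{s\bar A_{k-1}}]$ whenever $\left\|D_k\right\|_\infty\le c_k$; iterating from $k=n$ down to $k=1$,
\[
\tr\!\big[e^{s(A-\mu\I)}\big]\le 2^n\exp\!\Big(\tfrac{s^2}{2}\textstyle\sum_{k=1}^n c_k^2\Big).
\]

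\emph{Step 3 (Controlling the increments, and the main obstacle).} It remains to choose $c_k$ with $\sum_{k=1}^n c_k^2\le n\left\|A\right\|_L^2/4$, which then closes the argument via Steps 2 and 1. Writing $R_k=A-\I_k\otimes A^{(k)}$ for the optimal $A^{(k)}$ in \eqref{eq:operational-lipschitz}, one has $\left\|R_k\right\|_\infty\le\left\|A\right\|_L/2$ and $D_k=(\mathrm{Id}-\mathbb E_k)\,\mathbb E_{>k}[R_k]$, so each increment is indeed controlled by $\left\|A\right\|_L$. This is where the real work sits: unlike the classical bounded-differences argument, a plain triangle-inequality estimate is not quite sharp enough, because non-commutativity lets $\mathrm{Id}-\mathbb E_k$ expand the operator norm by up to $3/2$, so an individual $\left\|D_k\right\|_\infty$ can exceed $\left\|A\right\|_L/2$. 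One therefore argues more globally — either bounding $\sum_k\left\|D_k\right\|_\infty^2$ using the block decomposition of $D_k$ with respect to qubit $k$ together with $\tr_k[D_k]=0$, or replacing the Hoeffding lemma by a Bernstein-type bound involving the conditional second moments $\mathbb E_k[D_k^2]$, whose weights do sum to the right quantity. I expect this sharp increment estimate, rather than the (routine) Chernoff/Golden--Thompson machinery, to be the crux; one must also be careful that Golden--Thompson is invoked only inside the trace, as above.
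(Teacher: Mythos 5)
Your Step 1 is exactly the paper's final reduction (Chernoff bound via the moment generating function), but your Steps 2--3 take a genuinely different route from the paper, and the route as written has a real gap precisely where you flag it. The paper does \emph{not} run a matrix-martingale/Azuma argument on the observable side. It first proves a quantum Marton transport--entropy inequality, $\nor{\rho-\sigma}_{W_1}\le\sqrt{\tfrac n2\,S(\rho\|\sigma)}$ for the product state $\sigma=\I/2^n$ (by telescoping $\rho-\sigma=\sum_i(\rho_{1\ldots i}\otimes\sigma_{i+1\ldots n}-\rho_{1\ldots i-1}\otimes\sigma_{i\ldots n})$ and applying quantum Pinsker to each increment), and then converts it into the bound $\tr[e^{tA}]\le 2^n e^{nt^2/8}$ by the Gibbs-state/variational argument: for $\rho=e^{tA}/\tr[e^{tA}]$ one has $t\tr[A\rho]\le t\nor{A}_L\nor{\rho-\sigma}_{W_1}\le \tfrac{nt^2}{8}+S(\rho\|\sigma)$, and $S(\rho\|\sigma)=t\tr[A\rho]-\ln\tr[e^{tA}]+n\ln 2$ closes the loop. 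This is the Bobkov--G\"otze mechanism: the telescoping happens on the \emph{state} side, where each step is controlled by Pinsker with its sharp constant, so the factor-of-four problem you run into never arises.

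The gap in your argument is Step 3, and it is not cosmetic. What you actually establish is $\nor{D_k}_\infty\le\nor{(\mathrm{Id}-\mathbb{E}_k)}\cdot\nor{\mathbb{E}_{>k}R_k}_\infty\le 2\cdot\tfrac12\nor{A}_L=\nor{A}_L$, so with your operator Hoeffding lemma the recursion yields
\begin{equation*}
\tr\bigl[e^{s(A-\mu\I)}\bigr]\le 2^n\exp\Bigl(\tfrac{s^2 n\nor{A}_L^2}{2}\Bigr),
\end{equation*}
which after optimizing over $s$ gives only $2^n e^{-\delta^2/8}$, not $2^n e^{-\delta^2/2}$. To recover the stated constant you would need $\sum_k c_k^2\le n\nor{A}_L^2/4$, i.e.\ effectively $\nor{D_k}_\infty\le\nor{A}_L/2$, and neither of the two repairs you sketch is carried out: the block decomposition of $D_k$ with respect to qubit $k$ has off-diagonal blocks that are untouched by the condition $\tr_k[D_k]=0$, and a Bernstein-type bound via $\mathbb{E}_k[D_k^2]$ would require a commutation or operator-convexity step you have not supplied. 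So the proposal proves a correct concentration inequality with a worse constant, but not the proposition as stated; to get the sharp constant you should switch to the paper's transport--entropy route (or prove the sharp increment estimate, which is the hard part you have deferred).
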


Since $\dim( (\mathbb{C}^2)^{\otimes n} ) = 2^n$, it yields that the relative distribution of eigenvalues is (roughly) concentrated as a Gaussian law, with mean  $\tr[A]/2^n$ and standard deviation $\sqrt{n} \|A \|_L /2$.

A classical tool to establish concentration results are transport-entropy inequalities of the form
\begin{equation}\label{eq:transport-entropy} W_1( \sigma, \rho) \le \sqrt{  \frac{ n}{2}\,D_{KL}( \rho \| \sigma) }\,.\end{equation}
for probabilities $\sigma$, $\rho$, and $D_{KL}( \rho \| \sigma)$ is the relative entropy \eqref{eq:kl}. Relevant cases in the discrete setting, due to K.\ Marton \cite{marton1996bounding}, include the case where $q$ is a product distribution or more generally a Markov chain, under mild assumptions. Following this route, we establish a quantum analogue of \eqref{eq:transport-entropy} for product states on qubits. For an extension to non-product states, see \cite{de2022quantum}.

\begin{proposition}[Quantum Marton's inequality]
For any $\rho, \sigma \in  \mathcal{S}((\mathbb{C}^2)^{\otimes n})$, with
$$ \sigma = \sigma_1 \otimes \ldots \otimes \sigma_n$$
product state, the following inequality holds:
\begin{equation}\label{eq:quantum-marton}
\| \rho - \sigma\|_{W_1} \le \sqrt{  \frac{ n}{2}\,S( \rho \| \sigma) }.
\end{equation}
\end{proposition}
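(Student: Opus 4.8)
The plan is to transpose Marton's classical argument to the quantum setting: decompose $\rho-\sigma$ into a telescoping sum along the $n$ qubits, bound each increment of $\|\cdot\|_{W_1}$ by a ``one-qubit'' trace distance, and then combine the quantum Pinsker inequality, the chain rule for the quantum relative entropy --- which is exactly where the hypothesis that $\sigma$ is a product state enters --- and the Cauchy--Schwarz inequality.

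Concretely, I would introduce the interpolating states
\begin{equation*}
\tau_i := \rho_{\cur{1,\ldots,i}}\otimes\sigma_{i+1}\otimes\cdots\otimes\sigma_n \in \cS\bra{(\C^2)^{\otimes n}}, \qquad i=0,1,\ldots,n,
\end{equation*}
so that $\tau_0=\sigma$, $\tau_n=\rho$, and $\rho-\sigma=\sum_{i=1}^n(\tau_i-\tau_{i-1})$. The key observation is that $\tr_i[\tau_i]=\rho_{\cur{1,\ldots,i-1}}\otimes\sigma_{i+1}\otimes\cdots\otimes\sigma_n=\tr_i[\tau_{i-1}]$, since $\tr_i[\rho_{\cur{1,\ldots,i}}]=\rho_{\cur{1,\ldots,i-1}}$ while in $\tau_{i-1}$ the $i$-th qubit carries precisely the factor $\sigma_i$, which disappears under $\tr_i$. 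Hence, decomposing the traceless self-adjoint operator $\tau_i-\tau_{i-1}$ into its positive and negative parts and normalizing them to states $\alpha_i,\beta_i$, one gets $\tau_i-\tau_{i-1}=D_{\tr}(\tau_i,\tau_{i-1})(\alpha_i-\beta_i)$ with $\tr_i[\alpha_i]=\tr_i[\beta_i]$, so $\alpha_i-\beta_i\in\mathcal{B}_n$ and therefore $\|\tau_i-\tau_{i-1}\|_{W_1}\le D_{\tr}(\tau_i,\tau_{i-1})$; this is precisely the rescaling argument already used to prove the second inequality in \eqref{eq:comparison-quantum-tr-w1}. Since $\tau_i-\tau_{i-1}=\bra{\rho_{\cur{1,\ldots,i}}-\rho_{\cur{1,\ldots,i-1}}\otimes\sigma_i}\otimes\sigma_{i+1}\otimes\cdots\otimes\sigma_n$ and the trace norm is multiplicative under tensor products, the triangle inequality for $\|\cdot\|_{W_1}$ gives
\begin{equation*}
\|\rho-\sigma\|_{W_1}\le\sum_{i=1}^n D_{\tr}(\tau_i,\tau_{i-1})=\sum_{i=1}^n D_{\tr}\bra{\rho_{\cur{1,\ldots,i}},\,\rho_{\cur{1,\ldots,i-1}}\otimes\sigma_i}.
\end{equation*}

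Next I would apply the quantum Pinsker inequality $D_{\tr}(\omega,\omega')^2\le\tfrac12 S(\omega\|\omega')$ to each summand, and then the chain rule for the quantum relative entropy, $S(\omega_{AB}\|\mu_A\otimes\mu_B)=S(\omega_A\|\mu_A)+S(\omega_{AB}\|\omega_A\otimes\mu_B)$, iterated with $A=\cur{1,\ldots,i-1}$ and $B=\cur{i}$; because $\sigma$ is a product state this telescopes to
\begin{equation*}
\sum_{i=1}^n S\bra{\rho_{\cur{1,\ldots,i}}\,\|\,\rho_{\cur{1,\ldots,i-1}}\otimes\sigma_i}=S(\rho\|\sigma).
\end{equation*}
Setting $a_i=\tfrac12 S(\rho_{\cur{1,\ldots,i}}\|\rho_{\cur{1,\ldots,i-1}}\otimes\sigma_i)$, so that $D_{\tr}(\tau_i,\tau_{i-1})\le\sqrt{a_i}$, and using Cauchy--Schwarz in the form $\sum_{i=1}^n\sqrt{a_i}\le\sqrt{n\sum_{i=1}^n a_i}$, one concludes
\begin{equation*}
\|\rho-\sigma\|_{W_1}\le\sum_{i=1}^n\sqrt{a_i}\le\sqrt{n\sum_{i=1}^n a_i}=\sqrt{\tfrac n2\,S(\rho\|\sigma)},
\end{equation*}
which is the claimed inequality.

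I expect the argument to be essentially an assembly of already-available ingredients (the rescaling step is the one proving \eqref{eq:comparison-quantum-tr-w1}, and both the quantum Pinsker inequality and the relative-entropy chain rule are standard), so the ``hard part'' is conceptual rather than computational: one must recognize that the product form $\sigma=\sigma_1\otimes\cdots\otimes\sigma_n$ is exactly what makes both the marginal telescoping $\tr_i[\tau_i]=\tr_i[\tau_{i-1}]$ and the relative-entropy telescoping align, and one must verify that the auxiliary states $\alpha_i,\beta_i$ produced by the spectral rescaling still satisfy $\tr_i[\alpha_i]=\tr_i[\beta_i]$, so that the one-qubit bound $\|\tau_i-\tau_{i-1}\|_{W_1}\le D_{\tr}(\tau_i,\tau_{i-1})$ is legitimate. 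No genuinely new estimate beyond those mentioned should be needed.
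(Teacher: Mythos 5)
Your proof is correct and follows essentially the same route as the paper: the same telescoping interpolation $\tau_i=\rho_{\{1,\ldots,i\}}\otimes\sigma_{i+1}\otimes\cdots\otimes\sigma_n$, quantum Pinsker on each increment, a telescoping identity for the relative entropies (you phrase it via the chain rule, the paper via the explicit entropy formula --- these are equivalent), and Cauchy--Schwarz. You are in fact slightly more explicit than the paper on the step $\|\tau_i-\tau_{i-1}\|_{W_1}\le D_{\tr}(\tau_i,\tau_{i-1})$, correctly justifying via the positive/negative-part rescaling that the normalized difference lies in $\mathcal{B}_n$ because $\tr_i[\tau_i]=\tr_i[\tau_{i-1}]$.
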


To prove \eqref{eq:quantum-marton}, we argue first in the case $n=1$, so that the quantum Wasserstein distance of order $1$ coincides with the trace distance, and the inequality becomes
\begin{equation}\label{eq:pinsker-quantum}
 D_{\tr}(\sigma, \rho) \le \sqrt{  \frac 1 2 S( \rho \| \sigma) }.
\end{equation}
This inequality is well-known in the literature as the quantum analogue of the classical Pinsker's inequality. Its proof is quite straightforward from the classical Pinsker's inequality: given states $\rho$, $\sigma$, using spectral calculus on $\rho-\sigma$, we introduce the orthogonal projectors $\Pi_+ = \I_{ \{\rho - \sigma \ge 0\}}$, $\Pi_- = \I_{\{ \rho - \sigma <0\}}$ and probabilities on a two-point space $\{ -, +\}$
$$ r_{\pm} = \tr[\Pi_\pm \rho], \quad s_{\pm}  = \tr[ \Pi_\pm \sigma].$$
So that
\begin{equation} 2 D_{\tr}(\sigma, \rho) = |r_+ - s_+| + |r_-  - s_-| = \| r - s \|_1 \le \sqrt{ 2 S(r \| s)},\end{equation}
having used the classical Pinsker's inequality. Since $\{ \Pi_+, \Pi_-\}$ gives a measurement (often called Helstrom measurement), the monotonicity of the quantum relative entropy \eqref{eq:entropy-data-processing} yields that
$$ S( r \| s ) \le S( \rho \| \sigma), $$
hence \eqref{eq:pinsker-quantum}. For the general case, write
\begin{equation} \rho - \sigma = \sum_{i=1}^n \left( \rho_{1\ldots i} \otimes \sigma_{i+1 \ldots n} - \rho_{1\ldots i-1} \otimes \sigma_{i \ldots n}\right),\end{equation}
and apply Pinsker's inequality, for every $i=1, \ldots, n$:
\begin{equation} \left\| \rho_{1\ldots i} \otimes \sigma_{i+1 \ldots n} - \rho_{1\ldots i-1} \otimes \sigma_{i \ldots n}\right\|_1 \le \sqrt{2\, S\left( \rho_{1\ldots i} \otimes \sigma_{i+1 \ldots n} \| \rho_{1\ldots i-1} \otimes \sigma_{i \ldots n}\right)}.
\end{equation}
Summing upon $i$ and using concavity of the square root,
\begin{equation} \| \rho - \sigma\|_{W_1} \le \sqrt{\frac{ n }{2} \sum_{i=1}^n S\left( \rho_{1\ldots i} \otimes \sigma_{i+1 \ldots n} \| \rho_{1\ldots i-1} \otimes \sigma_{i \ldots n}\right) }.
\end{equation}
Finally, we conclude using the identity
\begin{equation}
\begin{split}
 S\left( \rho_{1\ldots i} \otimes \sigma_{i+1 \ldots n} \| \rho_{1\ldots i-1} \otimes \sigma_{i \ldots n}\right)  & =  S\left( \rho_{1\ldots i} \| \rho_{1\ldots i-1} \otimes \sigma_{i}\right) \\
 & = -S(\rho_{1\ldots i} ) +S(\rho_{1\ldots i-1})- \mathrm{tr}\left[ \rho_i \ln \sigma_i\right]
\end{split}\end{equation}
and a telescopic summation.

Starting from \eqref{eq:quantum-marton}, we now deduce the Gaussian concentration for quantum Lipschitz observables. Let $A \in \cO((\C^2)^{\otimes n})$ and for simplicity assume that $\tr[A] = 0$ and $\| A \|_L \le 1$. We prove that
\begin{equation} \tr[ e^{tA} ] \le 2^n \exp( n t^2 /8), \quad \text{for every $t\in \R$,}\end{equation}
so that concentration follows by spectral calculus and Markov inequality. For simplicity, assume that $t>0$ (otherwise argue with $-A$, $-t$ instead of $A$ and $t$). We choose in \eqref{eq:quantum-marton} the state $\sigma = \I/2^n$ which is a product state (also called the maximally mixed state). Duality for the quantum Wasserstein distance of order $1$ and and Marton's inequality yield, for any state $\rho$,
\begin{equation} t \, \tr[ A  \rho ] = t\, \tr[ A (\rho - \sigma)] \le t\|\rho - \sigma\|_{W_1} \le t \sqrt{ \frac{n}{2} S(\rho\| \sigma) } \le \frac{n t^2}{8}+  S( \rho \| \sigma).\end{equation}
We then choose  $\rho = e^{tA}/ \tr[e^{tA}]$ (a Gibbs state) so that $\ln \rho = t A - \ln \tr[e^{tA}]$ and
\begin{equation} S(\rho \| \sigma)  = \tr [ \rho \ln \rho] + \ln 2^n = t \tr[ A  \rho ] - \ln \tr[e^{tA}] + \ln 2^n.\end{equation}
Rearranging the terms we conclude that
\begin{equation} \ln \tr[e^{tA}] \le  \frac{n t^2}{8} + \ln 2^n,\end{equation}
that is equivalent to \eqref{eq:concentration}.

\subsection{Continuity of entropy}
As a second application of the quantum Wasserstein distance of order $1$, we discuss a modulus of continuity for the quantum entropy. In classical information theory, Shannon's entropy
\begin{equation}
S(\rho) = -\sum_{x \in \cX} \rho(x) \ln \rho(x)
\end{equation} quantifies the amount the information contained in a classical probability distribution $(\rho(x))_{x \in \cX}$. Usually the logarithm is in base $2$ (entropy is measured in bits of information), but for simplicity we use the natural basis here. 
The notation $S(X) = S(\rho)$ is also quite common, where $X$ is a random variable with values in $\cX$ and law $\rho$. For example, the entropy of a random variable $X$ with uniform distribution over $d$ values is $\ln d$, hence if $d=2^n$, we have that $S(X) = n \ln 2$.

For a state $\rho \in \cS(\cH)$ on a quantum system $\cH$, the quantum analogue of Shannon's entropy is defined as
\begin{equation} S(\rho) = - \tr[ \rho \ln \rho],\end{equation}
and was first introduced by von Neumann. Interestingly, this definition was historically prior to Shannon's work -- but this should not be surprising since entropy had already been considered in thermodynamics and statistical physics. As its classical counterpart, the quantum entropy plays a fundamental role in information theory.

Clearly, the entropy is continuous as a function of its argument (as long as we consider finite sets or finite-dimensional quantum systems), however a precise modulus of continuity would be quite useful in applications.  Polyanski and Wu  \cite{polyanskiy2016wasserstein} proved the following explicit bound, for probability distributions $\sigma$, $\rho$ on the discrete cube $\cX = \{0,1\}^n$:
\begin{equation}\label{eq:poly-wu} | S(\sigma) -S(\rho) | \le n h_2\left(\frac{W_1(\sigma,\rho)}{n}\right).\end{equation}
where $h_2(x) =  - (1-x) \ln(1-x) - x \ln x$ the so-called binary entropy function, i.e., Shannon's entropy of a Bernoulli distribution with parameter $x \in [0,1]$. To grasp the relevance of this result, assume that $\sigma$ is uniform over the $2^n$ values, so that $S(\sigma) \approx n$ and $\rho$ is close to $\sigma$, i.e., \ $W_1(\sigma,\rho)$ is much smaller than $n$. Then, $h_2(  W_1(\sigma,\rho)/n )$ is also small, hence $S(\rho)$ is also of order $n$.

In the  quantum setting,  given states  $\rho$, $\sigma \in \mathcal{S}(\cH)$ on a quantum system $\cH$, Fannes \cite{fannes1973continuity} and Audenaert  \cite{audenaert2007sharp} proved  that
\begin{equation} |S(\rho) - S(\sigma)| \le h_2\left( D_{\tr}(\sigma, \rho) \right) + D_{\tr}(\sigma, \rho) \ln ( \dim(\cH)-1).
\end{equation}
When compared to \eqref{eq:poly-wu}, we see that if $\dim(\cH) = 2^{n}$ as in the case of $n$-qubit systems, the inequality becomes much less effective, because of the second term in the right hand side grows linearly as $n$ grows. Moreover, recalling \eqref{eq:comparison-quantum-tr-w1}, it would be better to replace the trace distance with the quantum Wasserstein distance of order $1$, divided by $n$.

This is precisely what was obtained in \cite{de2021quantum-1} and later improved in \cite{de2022wasserstein}: for $\rho$, $\sigma \in \mathcal{S}((\mathbb{C}^2)^{\otimes n})$, it holds
\begin{equation} |S(\rho) - S(\sigma)| \le n h_2 \left( \frac{ \| \rho - \sigma\|_{W_1}}{n} \right) + \| \rho - \sigma\|_{W_1} \ln (3).\end{equation}
Dividing both sides by $n$, one obtains an inequality that can be even extended to the $n =\infty$ case, as investigated in \cite{de2022wasserstein}.

\section{Conclusion}

We presented two recent approaches to the theory of optimal transport for quantum systems.  As described in Section~\ref{sec:overview}, these are not the only ones, and several alternatives have been explored, and possibly other ones will be introduced, since this research field has become quite active, see \cite{toth2022quantum, feliciangeli2021non, bistron2022monotonicity, duvenhage2020quadratic, duvenhage2022quantum} for even more recent and interesting variants. This variety is indeed a resource for the field, since a particular distance may be better suited for the application one has in mind, from quantum computing, machine learning and quantum state tomography \cite{chakrabarti2019quantum,rouze2021learning,onorati2023efficient, kiani2022learning, de2023limitations,hirche2022quantum}, to the study of dynamics and convergence to equilibria of systems \cite{capel2020modified, gao2021ricci, gao2022complete}. From a purely mathematical perspective, we are sure that future investigations on the links between these formulations will provide new insights on the geometrical properties of quantum systems.


\printbibliography

\end{document}